\numberwithin{equation}{section} \numberwithin{figure}{section}
\newtheorem{thm}{Theorem}
\newtheorem{theorem}{Theorem}
\newtheorem*{ass}{Assumption}
\newtheorem{defi}[thm]{Definition}
\newtheorem{rem}[thm]{Remark}
\theoremstyle{definition}
\newtheorem{exmp}{Example}[section]
\begin{document}
\hoffset = -1truecm \voffset = -1truecm
\title{Lorentzian Peak Sharpening and Sparse Blind Source Separation for NMR Spectroscopy}
\author{Yuanchang Sun\thanks{Department of Mathematics and Statistics, Florida International University, Miami FL 33189, USA.}, Jack Xin\thanks{Department of Mathematics,
University of California at Irvine, Irvine, CA 92697, USA.} }
\date{}
\maketitle

\begin{abstract}
In this paper, we introduce a preprocessing technique for blind source separation (BSS) of nonnegative and overlapped data.  For Nuclear Magnetic Resonance spectroscopy (NMR),  the classical method of Naanaa and Nuzillard (NN)  requires the condition that source signals to be non-overlapping at certain locations while they are allowed to overlap with each other elsewhere.  NN's method works well with data signals that possess stand alone peaks (SAP). The SAP does not hold completely for realistic NMR spectra however.  Violation of SAP often introduces errors or artifacts in the NN's separation results.  To address this issue, a preprocessing technique is developed here based on Lorentzian peak shapes and weighted peak sharpening.  The idea is to superimpose the original peak signal with its weighted negative second order derivative.  The resulting sharpened (narrower and taller) peaks enable NN's method to work with a more relaxed SAP condition, the so called dominant peaks condition (DPS), and deliver improved results.  To achieve an optimal sharpening while preserving the data nonnegativity,  we prove the existence of an upper bound of the weight parameter and propose a selection criterion.   Numerical experiments on NMR spectroscopy data show satisfactory performance of our proposed method.

\end{abstract}
%
%
%
%
%
\setcounter{equation}{0} \setcounter{page}{1}
\section{Introduction}
In applications such as computer tomography, biomedical imaging, and spectroscopic sensing, the data collected are usually nonnegative and correlated, and the objects being imaged are often mixtures of substances, which pose a serious challenge for direct identification and quantification of the constituents. In many situations, we need to decompose the data into a set of basic components (source signals) without knowing the mixing process, or solve a blind source separation (BSS) problem.

The objective of BSS is to extract a number of source signals from their linear mixtures without the knowledge of the mixing process.  BSS has been playing a central role in a wide range of signal and image processing problems such as speech recognition, sound unmixing, image separations, and text mining, to name a few \cite{choi,Cic, Comon1}.  In this paper we are interested in a BSS problem arising from the Nuclear Magnetic Resonance (NMR) spectroscopy \cite{nmr}.  Being one of the preeminent imaging techniques in chemistry, NMR spectroscopy is frequently used by chemists and biochemists to study the molecular structures of organic compounds.  NMR spectroscopy and other imaging techniques have made it possible to identify and classify pure substances by their fingerprint spectra.  The real world data however may involve multiple unknown substances besides impurities, and are subject to background and environment changes. This makes the data analysis hopeless unless we can unmix or separate the mixed data into a list of source components.  In many practical situations, we need to determine from a mixture the constituent chemicals and their coefficients as a BSS problem whose mathematical model takes the following matrix form;
\begin{equation}
\label{BSS}
 X= A\, S + N\;
\end{equation}
where $X\in \mathbb{R}^{m\times p}, A \in \mathbb{R}^{m\times b}, S \in \mathbb{R}^{n\times p} $.
Rows of $X$ represents the spectral mixtures, rows of $S$ are the source signals, and entries of matrix $A$ are the mixing coefficients, $N$ is the noise matrix.  The goal of BSS is to solve for $A$ and $S$ given $X$. If $P$ is a permutation matrix and $D$ an invertible diagonal matrix, one can immediately notice that $ AS = (APD)(D^{-1}P^{- 1}S)$,  hence $(A,S)$ and $(APD,D^{-1}P^{-1}S)$ are considered equivalent solutions in BSS.

\medskip

There have been mainly two classes of BSS methods for solving (\ref{BSS}). The first class of methods belong to statistical regime.  Among others, independent component analysis (ICA) is the most well studied statistical BSS approach, it decomposes a mixed signal into additive source components based on the mutual independence of the non-Gaussian source signals.  The statistical independence requires uncorrelated source signals, and this condition however is not always satisfied by realistic data.  For example, the statistical independence does not hold in the NMR spectra of chemical compounds where molecules responsible for each source share common structural features.
The deterministic BSS methods include nonnegative matrix factorization (NMF) and geometrical methods. Introduced by Paatero and Tapper \cite{NMF0} and popularized by Lee and Seung \cite{Lee},  NMF has become the prevalent method for solving nonnegative BSS problems.  NMF seeks a factorization of $X$ into product of two nonnegative matrices by minimizing the cost function of a certain distance or divergence metric \cite{choi}.  NMF does not impose source independence, however, some additional constraints such as sparsity of the sources and/or the mixing matrix, are often imposed to control the non-uniqueness.  In \cite{NMF_OR}, the orthogonality (correlation) constraints and prior knowledge of a target spectrum are incorporated into NMF to guide the factorization and improve the effectiveness in chemical agent detection.  In \cite{Miao,Sch}, NMF is augmented with a minimum determinant constraint on the estimated mixing matrix to tackle the non-uniqueness.   Although they have been successful in some BSS problems, the NMF and ICA are both non-convex methods which can be unreliable in decomposing real world data.  Geometrical BSS methods are based on convex geometry of the data matrix $X$. The columns of $X$ are nonnegative linear combinations of those of $A$. In the hyperspectral unmixing (HSI) setting, a condition called pure pixel assumption (PPA) was proposed in \cite{Chang_07} which requires the presence in the data of at least one pure pixel of each endmember (source signal). In NMR spectroscopy, PPA was reformulated by Naanaa and Nuzillard \cite{NN05}.  The source signals are only required to be non-overlapping at some locations of acquisition variable.  This condition was applied to NMR data unmixing and led to a major success of a convex cone method.  Such a local sparseness condition greatly reduces this problem to a convex one which is solvable by linear programming.  Though the convex cone method is geometrically elegant, the working condition is still restrictive. In fact, NNA or PPA is not always satisfied in either NMR or HSI. Within the convex framework, a recent work of the authors studied how to postprocess with the abundance of mixture data, and how to improve mixing matrix estimation with major peak based corrections when the strict sparseness in NNA is violated mildly \cite{sun_xin_pNN}.   Other geometrical methods include minimum volume cone method which is to fit a simplex (convex cone) of minimum volume to the data set \cite{MVT, MVSA}. This method is a non-convex approach which amounts to solving a minimization problem by finding a matrix with minimum volume under a constraint.

In the present work, we are concerned with a class of NMR spectral data from chemicals sharing common molecular structures.  Hence their spectra should consist of similar peak components.  In fact, the sparseness condition NNA proposed by Naanaa and Nuzillard can be interpreted as a stand alone peak condition (SAP) for NMR data with peak components.  That is, each source signal possesses a stand alone peak extending over an acquisition interval while other source spectra are identically zero over this interval.  In this paper, we consider how to generalize NN method if the SAP condition is not satisfied strictly.  We shall consider a regime where the source signals have dominant peaks (DPS) over one another on certain acquisition intervals.  The idea is to sharpen these peaks (shrink the dominant intervals) so that the dominant peaks approximately become stand alone peaks, hence to improve the NN separation results.  In the context of image enhancement (for example deblurring),  Kovasznay and Joseph \cite{Kov} in 1955 found that a blurred image could be deblurred and sharpened by subtracting a fraction of its Laplacian
\begin{equation*}
U_{\mathrm{e}} = U_{\mathrm{o}} -k\Delta U_{\mathrm{o}}\;,
\end{equation*}
where $U_{\mathrm{o}}$ represents the original image, $U_{\mathrm{e}}$ the enhanced image.   This idea can be applied to signals to sharpen their peaks and enhance the resolution.  Note that a NMR spectrum can be expressed as the nonnegative linear combinations of Lorentz functions, as shown in Fig. \ref{lorentz example}.   To sharpen the Lorentzian peaks, we subtract a weighted second order derivative from the original signal to enhance the resolution.
\begin{equation*}
\hat{S} = S- k S''\;,
\end{equation*}
where $\hat{S}$ is the sharpened signal, $S$ the original signal, $S''$ is the second order derivative, and $k$ is the weight parameter whose selection will be discussed in detail later.  The sharpening makes the peaks narrower with enhanced resolution so they approximately become stand alone peaks.  After the preprocessing is accomplished, the NN approach is then applied to retrieve the mixing matrix $A$.  The separation of the source signals may be solved by a nonnegative least squares method.

The paper is organized as following: In section 2, we shall briefly review the NN method and its partial sparseness condition, then state the more suitable stand alone peaks and dominant peaks assumptions for NMR data.  In section 3, we present the weighted peak sharpening method and its mathematical analysis.  A selection criterion of the weight parameter is proposed for optimal sharpening and data nonnegativity.  In section 4, numerical experiments are performed to test the effectiveness of the proposed method.  Concluding remarks are in section 5.

\begin{figure}
\includegraphics[height=5cm,width=13cm]{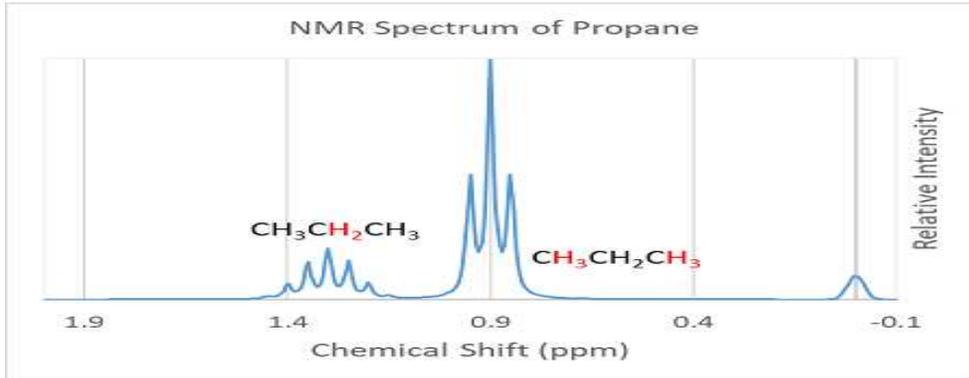}
\caption{NMR spectrum of an organic compound : Propane. hydrogens would be split into two peaks (a doublet), and the aldehyde H into four peaks (a quartet). Source: {\sl www.study.com}.}
 \label{lorentz example}
\end{figure}
\section{Sparse BSS and Geometric Constructions}
\subsection{NN's Method}
In this part, we shall review NN's method for nonnegative and overlapped data \cite{NN05}.  The working criterion of their method is a local sparseness assumption on source signals;  it is that the signals are only allowed to be non-overlapping at certain acquisition locations, while they might overlap with each other elsewhere.  Mathematically speaking, the source matrix $S$ needs to satisfy the following assumption (recall that $m$ is the number of mixed signals, $n$ the number of source signals, and $p$ the number of samples):
\begin{ass}[NNA]
For each $i\in\{1,2,\dots,n \}$ there is an $j_{i}\in
\{1,2,\dots,p\}$ such that $s_{i,j_i}>0$ and $s_{k,j_i}=0\;
(k=1,\dots,i-1,i+1,\dots,n)\;.$
\end{ass}
Let us consider equation (\ref{BSS}) in terms of columns
\begin{equation}
\label{LinComb}
 X^{j} = \sum^{n}_{k=1}s_{k,j}A^{k}, \;\;\;\;\;\; j = 1,\dots,p,
\end{equation}
then $\displaystyle X^{j_i} = s_{i,j_i}A^i$, $i =
1,\dots,n\;\; $ or $A^{i} = \frac{1}{s_{i,j_i}}X^{j_i}$ by the NNA condition.  Therfore
equation
(\ref{LinComb}) can be expressed as
\begin{equation}
\label{NNlinComb} X^{j} = \sum^{n}_{i = 1}
\frac{s_{i,j}}{s_{i,j_i}}X^{j_i}\;,
\end{equation}
which implies that every column of $X$ is in fact a nonnegative linear
combination of the columns of the matrix $[X^{j_1},\dots,X^{j_n}]$.  Denote $\hat{A} =
[X^{j_1},\dots,X^{j_n}]$, a submatrix of $X$ with $n$
columns.  Examining equations (\ref{LinComb}) and (\ref{NNlinComb}), we see that
each column of $\hat{A}$ is collinear to a particular column of $A$ .
Once all the $j_i$'s are found, an estimation of
the mixing matrix is achieved.  The identification of $\hat{A}$'s
columns is equivalent to identifying the edges of a convex cone that encloses the data
columns of $X$.  For a noiseless case $X = AS$, the following constrained equations
are formulated for the identification of $\hat{A}$,
\begin{equation}
\label{LPNF} \sum^{p}_{j = 1, j\neq k}X^{j} \lambda_j = X^{k},\;\;\;\;
\lambda_j\geq 0,\;\;\;\; k = 1,\dots,p.
\end{equation}
Then a column vector $X^{k}$ will be a column of $\hat{A}$ if and
only if the constrained equation (\ref{LPNF}) is inconsistent (has
no solution $X^j$, $j\not =k$).  The Moore-Penrose inverse
$\hat{A}^{+}$ of $\hat{A}$ is then calculated and an estimate of $S$
is obtained: $\hat{S} = \hat{A}^{+} X$.

As it applies to NMR spectra with peak, NNA can be restated as the stand alone peak (SAP) condition: each source signal possesses a stand alone peak over certain acquisition interval, where other sources are identically zero.  Precisely the source matrix $S$ should satisfy the following condition:
\begin{ass}[SAP] For each
$i\in\{1,2,\dots,n \}$ there exists a set of consecutive integers $\mathcal{I}\subset \{1,2,\dots,p\} $ such that $S_{i,k} >0 $ for $k\in \mathcal{I}$ and $ S_{j, k} = 0\; (j=1,\dots,i-1,i+1,\dots,n)\;.$
\end{ass}
\begin{figure}
\includegraphics[height=5cm,width=8cm]{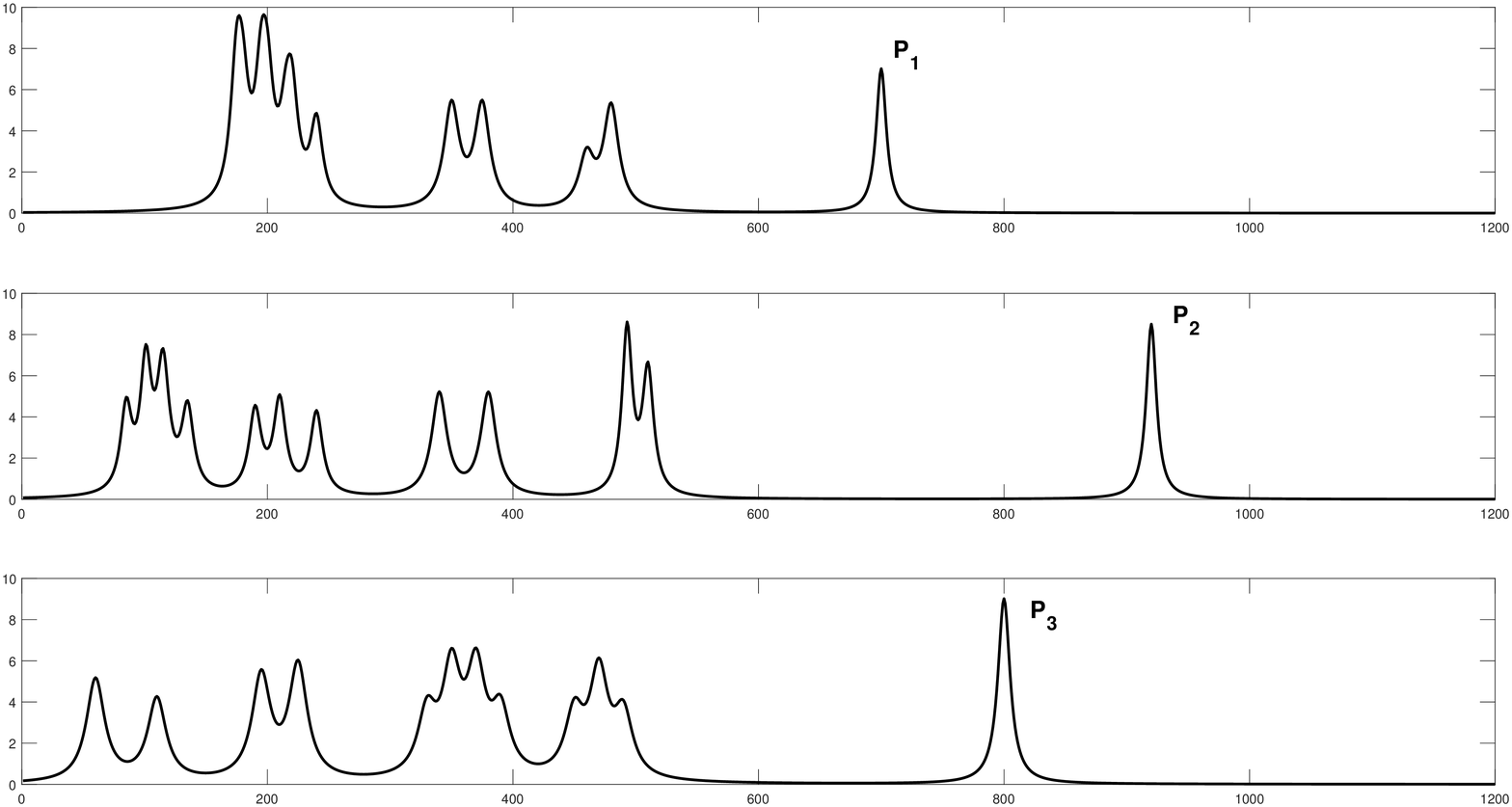}
\includegraphics[height=5cm,width=8cm]{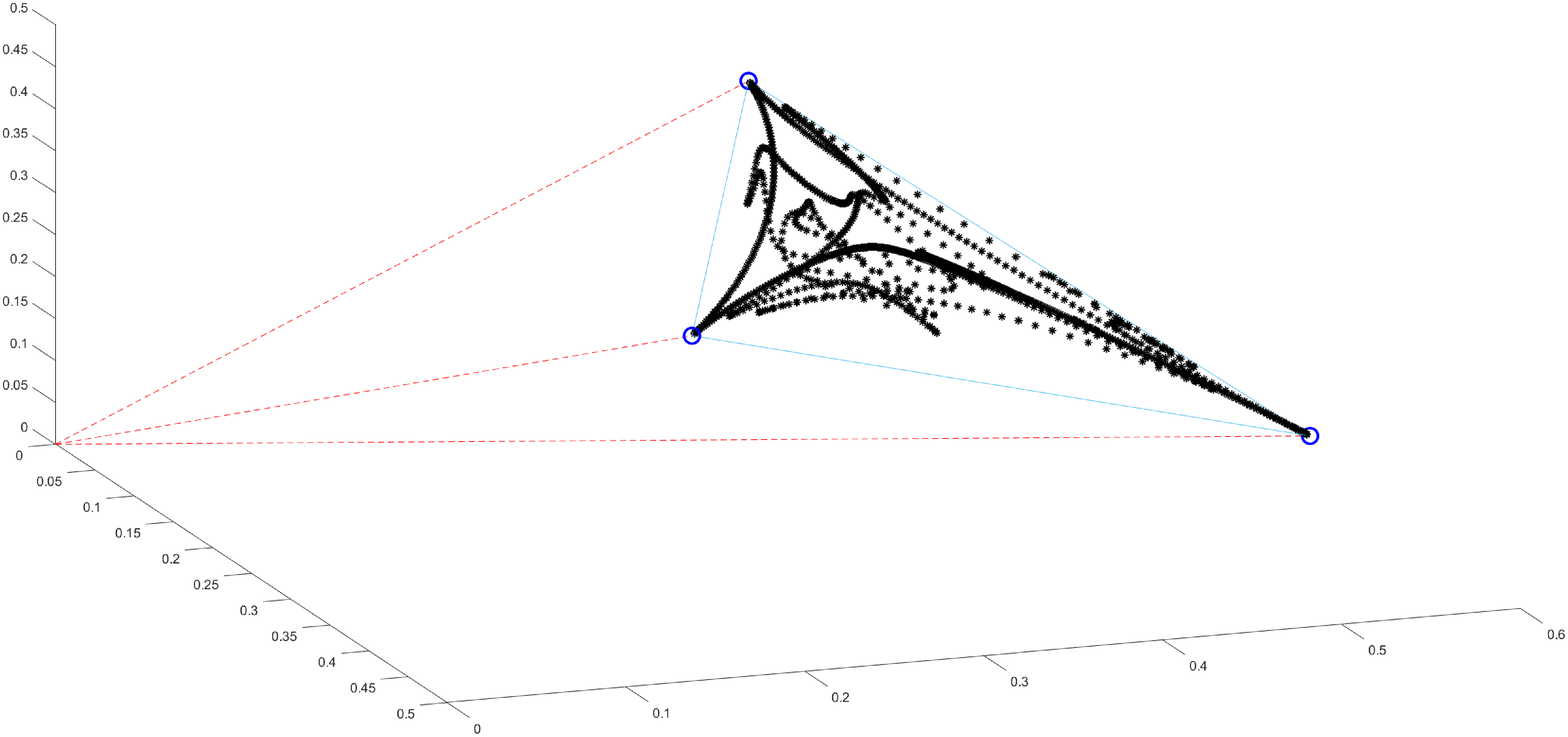}
\caption{Left: the synthetic Lorentzian NMR spectra of three SAP sources.  Each spectrum has a stand alone peak indicated by $P_1,P_2$, and $P_3$.  Right: the scattered plots of $X$ (columns of $X$) scaling to be on plane $x+y+z = 1 $.}
 \label{NNA}
\end{figure}
The SAP condition is illustrated by NMR spectra of three sources in the left plot of Fig. \ref{NNA}, it can be seen that each source signal has a stand alone peak denoted by $P_1,P_2,$ and $P_3$, respectively.  In this illustrative example, there are three mixtures and three sources for the linear mixture model (\ref{BSS}):
\begin{exmp}
\label{example1}
 $X_{3\times p} = A_{3\times 3} S_{3\times p}$, we shall view each column of $X$ as a point in the 3-space, then
{
\begin{equation*}
\noindent \left [ X^1,X^2,\cdots,X^p \right ]
\end{equation*}
\begin{equation*}
 = \left [ A^1,A^2,A^3 \right ]\times
\left(
\begin{array}{cccccccccc}
        * & \cdots & * & {\bf \color{blue}\mathbf{u} }     & {\color{blue}o}  & {\color{blue}o}  & * & \cdots & *\\
        * & \cdots & * & {\color{blue}o }     & {\bf \color{red}\mathbf{v}}  & {\color{blue}o}  & * & \cdots & *\\
        * & \cdots & * & {\color{blue}o }     & {\color{blue}o}  &  {\bf \color{black}\mathbf{w}} & * & \cdots & *
       \end{array}
 \right)\;,
\end{equation*}}
Here $\mathbf{u,v,w}$ are the stand alone peaks from the three source signals.
\end{exmp}
These stand alone peaks span a convex cone enclosing all the columns of $X$, illustrated in the right plot in Fig. \ref{NNA}.  The estimation of $A$ is then equivalent to the identification of this cone.  To do so, the following optimization problem is solved for each scaled column of $X$ (i.e., the columns are scaled to be on a plane)
\begin{equation}
\label{NNCG}
c = \min \sum^p_{j = 1, j\neq k} \lambda_j ,\;\; \mathrm{such\; that} \; \sum^{p}_{j = 1, j\neq k}X(:,j) \lambda_j = X(:,k)\;\;, \lambda_j\geq 0.
\end{equation}
It is shown in \cite{Dul} that $X(:,k)$ is a vertex of the convex
cone if and only if the optimal objective function value $c^*$ is greater than 1.   Once $A$ is located, $S$ maybe thereafter recovered by nonnegative least squares.  This geometric construction of $A$ is also called vertex component analysis (VCA).

\subsection{Relaxation of Stand Alone Peaks: Dominant Peaks}
The NN method proves to be successful in separating data signals if the working condition is strictly satisfied. The real-world data may not satisfy the SAP completely due to measurement noises or the underlying physical process,  consequently the NN's method might introduce errors (spurious peaks) in the output. It is more realistic to assume that the signals are positive extending over the whole acquisition range and stand alone peaks could overlap to some extent, that is the stand alone peaks become dominant peaks (DPS).  More formally, the source matrix is required to satisfy the following condition.
\begin{ass}[DPS] For each
$i\in\{1,2,\dots,n \}$ there exists a set of consecutive integers $\mathcal{I}\subset \{1,2,\dots,p\} $ such that $S_{i, k} >0 $ for $k\in \mathcal{I}$ and $ S_{j, k} = \epsilon_{k_j} \ll S_{i,k} \; (j=1,\dots,i-1,i+1,\dots,n)\;.$
 \end{ass}
Simply said, each source signal has a dominant peak over an acquisition interval where the other sources are allowed to be nonzero.  DPS condition is more appropriate for NMR spectra consisting of positive-valued peaks with tails extending over the whole range of acquisition variable. In DPS signals, the previous example (\ref{example1}) of three source signals matrix $S$ would look like
\begin{equation*}
S=
\left(
\begin{array}{cccccccccc}
        * & \cdots & * & {\bf \color{blue}\mathbf{u} }     & {\color{blue}\mathbf{\epsilon}_1}  & {\color{blue}\mathbf{\epsilon}_2}  & * & \cdots & *\\
        * & \cdots & * & {\color{blue}\mathbf{\epsilon}_3 }     & {\bf \color{red}\mathbf{v}}  & {\color{blue}\mathbf{\epsilon}_4}  & * & \cdots & *\\
        * & \cdots & * & {\color{blue}\mathbf{\epsilon}_5 }     & {\color{blue}\mathbf{\epsilon}_6}  &  {\bf \color{black}\mathbf{w}} & * & \cdots & *
       \end{array}
 \right)\;,
\end{equation*} where $\mathbf{u}, \mathbf{v},\mathbf{w}$ indicate the three dominant peaks.

\section{The Method}
\subsection{Lorentz Function and Its Sharpening}
From analytic chemistry \cite{nmr}, we learned that an NMR spectrum is represented as a sum of symmetrical, positive valued, Lorentzian shaped peaks, that is the spectral components of an NMR spectrum are Lorentz functions as shown in Fig. \ref{lorentz example}.  Therefore, the NMR spectrum consists of weighted sum of lorentz functions in the following form
\begin{equation*}
  \mathcal{L}(x) = \frac{\left(\frac{1}{2}\Gamma \right )^2h}{(x-x_0)^2+\left( \frac{1}{2}\Gamma\right)^2 }
\end{equation*}
where $ \Gamma$, the scale parameter which specifies its full width at half maximum (FWHM), $x_0$ is the center of the peak, and $h$ is the height.  Apparently the function reaches its maximum height $h$ at $x = x_0$.  For the purpose of analysis, we shall consider the case of $x_0 = 0$ (since one can simply translate the function to achieve the Lorentzian curve at the desired center), in the form  $\displaystyle \mathcal{L}(x) = \frac{w^2h}{x^2+w^2 } $, where $w = \frac{1}{2}\Gamma $, the half width at half maximum (HWHM).  Below are its first several derivatives
\begin{eqnarray*}
\mathcal{L}(x)  & = & \frac{w^2h}{x^2+w^2 }\;, \\
\mathcal{L}'(x) & =  & -2w^2h\frac{x}{(x^2+w^2)^2}\;,\\
\mathcal{L}''(x) & =  & 2w^2h\frac{3x^2-w^2}{(x^2+w^2)^3}\;, \\
\mathcal{L}^{(3)}(x) & = & 2w^2h \frac{12x(w^2-x^2)}{(x^2+w^2)^4}\;
\end{eqnarray*}
and their graphs shown in Fig. \ref{graphs}.  We consider the function $ D(x) =\mathcal{L}(x) -\mathcal{L}''(x) $ and get an idea how the peak in $D(x)$ is sharper than $\mathcal{L}(x)$
\begin{eqnarray}
\label{sharp}
D(x) & = &\mathcal{L}(x)  -\mathcal{L}''(x)\\
              & = & \frac{w^2h}{x^2+w^2 } -2w^2h\frac{3x^2-w^2}{(x^2+w^2)^3} \\
              & = &w^2h \frac{x^4+ 2(w^2-3)x^2 + w^4+2w^2}{(x^2+w^2)^3}
\end{eqnarray}
As shown in left plot in Fig. \ref{Lor1}, a slightly enhanced signal is achieved as a result of cancelation in the side regions and reinforcements in the center region.  For the data analysis and application, the sharpened curve needs to be nonnegative for all the $x$ values.  We shall next investigate under what condition the sharpened signal $D(x)$ remains nonnegative.  The following theorem offers a lower bound of $w$ for the nonnegativity of $D(x)$.
\begin{theorem}
\label{lower}
The sharpened signal $D(x) =\mathcal{L}(x)  -\mathcal{L}''(x)$ is nonnegative for all values of $x$ if and only if $w^2\geq \frac{9}{8}$ (or $w\geq \frac{3}{2\sqrt{2}}$).
\end{theorem}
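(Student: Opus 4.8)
\emph{Proof plan.} The plan is to reduce the inequality $D(x)\ge 0$ to an elementary statement about a one-variable quadratic. The closed form already derived, $D(x)=w^2h\,\dfrac{x^4+2(w^2-3)x^2+w^4+2w^2}{(x^2+w^2)^3}$, has positive denominator $(x^2+w^2)^3$ and positive prefactor $w^2h$, so the sign of $D(x)$ is exactly the sign of the numerator. Substituting $t=x^2$, which sweeps out $[0,\infty)$ as $x$ sweeps out $\mathbb{R}$, the assertion of the theorem becomes: $P(t):=t^2+2(w^2-3)t+(w^4+2w^2)\ge 0$ for all $t\ge 0$ if and only if $w^2\ge\tfrac98$.

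Next I would study $P$ as an upward parabola in $t$, with vertex at $t^\ast=3-w^2$, and minimize it over the half-line $[0,\infty)$. If $w^2\ge 3$ then $t^\ast\le 0$, so $P$ is increasing on $[0,\infty)$ and $\min_{t\ge0}P(t)=P(0)=w^4+2w^2>0$, giving $D\ge 0$ automatically. If $w^2<3$ then $t^\ast=3-w^2>0$ is admissible, and $\min_{t\ge0}P(t)=P(t^\ast)$. A short computation gives $P(3-w^2)=-(3-w^2)^2+w^4+2w^2=8w^2-9$, so in this case $P\ge0$ on $[0,\infty)$ precisely when $8w^2-9\ge0$, i.e. $w^2\ge\tfrac98$. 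Since $\tfrac98<3$, the two cases combine to the stated equivalence.

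For the ``only if'' direction I would exhibit the failure explicitly rather than argue by contraposition alone: if $w^2<\tfrac98$ then $w^2<3$, so $t^\ast=3-w^2>0$ is an admissible value with $P(t^\ast)=8w^2-9<0$; choosing $x$ with $x^2=t^\ast$ yields $D(x)<0$, so nonnegativity genuinely fails. Finally, since $w\ge0$, the condition $w^2\ge\tfrac98$ is equivalent to $w\ge\tfrac{3}{2\sqrt2}$, matching the statement.

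As for the main obstacle: there is no deep difficulty here — the proof is an elementary optimization of a quadratic on a half-line. The one point I would be careful about, and the place where a naive argument goes wrong, is that $t=x^2$ is constrained to $[0,\infty)$, so one must \emph{not} simply impose the discriminant condition on $P(t)$ (which would yield a different, incorrect threshold); instead one must first check whether the vertex $t^\ast=3-w^2$ lies in $[0,\infty)$ before reading off $\min_{t\ge0}P(t)$.
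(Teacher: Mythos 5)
Your proof is correct and follows essentially the same route as the paper: both reduce the sign of $D(x)$ to that of the numerator $x^4+2(w^2-3)x^2+w^4+2w^2$, split into the cases $w^2\ge 3$ and $w^2<3$, and compute the minimum value $8w^2-9$ attained at $x^2=3-w^2$ (the paper differentiates the quartic directly where you substitute $t=x^2$ and minimize a quadratic on $[0,\infty)$, a purely cosmetic difference). One small point in your favor: you explicitly establish the ``only if'' direction by exhibiting $x$ with $D(x)<0$ when $w^2<\tfrac98$, whereas the paper's proof states only the ``if'' conclusion even though the theorem claims an equivalence.
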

\begin{proof}
Before we get into the proof.  We notice the function $\mathcal{L}''(x)$ has three critical points $x = 0, x = \pm w$ (the zeros of $\mathcal{L}^{(3)}(x)$ ) and it attains absolute minimum value $-\frac{2h}{w^2}$ at $x = 0$, and maximum $\frac{h}{2w^2}$ at $x = \pm w$.  It can also be seen that function $\mathcal{L}(x)$ has the absolute maximum at $x = 0$, so the sharpened signal $D(x)$ achieves its maximum value $h(1+\frac{2}{w^2})$ at $x = 0$.  We define $\alpha =1+\frac{2}{w^2} $  as the sharpening factor, clearly a bigger $\alpha$ means a better sharpening.  This also implies that the sharpening is less noticeable for wider peaks (bigger $w$).

In order for $D(x) = \mathcal{L}(x)  -\mathcal{L}''(x) $ to be nonnegative only if its numerator part $N(x) = x^4+ 2(w^2-3)x^2 + w^4+2w^2 \geq 0  $ (because its denominator is always positive).  Then the problem is to determine for what values of $w$, $N(x)\geq 0$.  First of all if $w^2\geq 3, x^4+ 2(w^2-3)x^2 + w^4+2w^2\geq 0$.  Now we investigate the case when $w^2<3$, consider the derivative of $N(x)$,
\begin{equation*}
N'(x) = 4x^3-4(3-w^2)x= 0
\end{equation*} solves for the three critical points of $N(x)$, $x = 0; x = \pm \sqrt{3-w^2}$.  By the first order derivative test,  $N(x)$ attains its minimum at $x = \pm \sqrt{3-w^2}$ (symmetry),
\begin{eqnarray*}
N(\pm \sqrt{3-w^2})& = & (3-w^2)^2-2(3-w^2)(3-w^2) + w^4 +2w^2\\
                                     & = & -(3-w^2)^2+ w^4+2w^2 \\
                                     & = & 8w^2-9\;.
\end{eqnarray*}
Therefore, $N(x)$ will remain nonnegative if $8w^2-9 \geq 0$.   We conclude that if $w^2\geq \frac{9}{8}$ (or $w\geq \frac{3}{2\sqrt{2}}$) then  $D(x) =\mathcal{L}(x)  -\mathcal{L}''(x)  $ is nonnegative for all values of $x$.
 \end{proof}

\begin{figure}
\includegraphics[height=7cm,width=14cm]{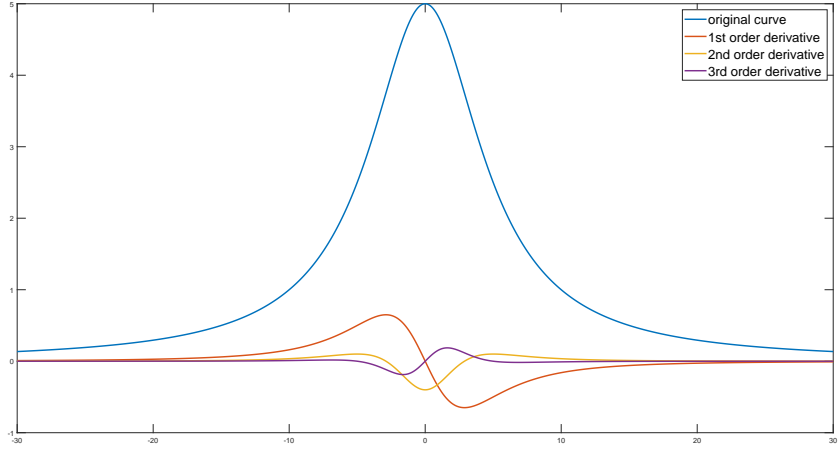}
\caption{The values of the parameters are :  $w = 5, h = 5$. }
\label{graphs}
\end{figure}
\begin{rem}
  Note that the wider the peaks, the less noticeable sharpening will be achieved since the sharpening factor $\alpha =1+  \frac{2}{w^2} $ is close to 1 for wide peak signals.  In order to achieve a recognizable and better sharpening for such signals, we shall consider a weighted sharpening below.
\end{rem}

The weighted peak sharpening defined as
\begin{equation}\label{wtsh}
D_k(x) =\mathcal{L}(x)  - k \mathcal{L}''(x)= w^2h \frac{x^4+ 2(w^2-3k)x^2 + w^4+2k w^2}{(x^2+w^2)^3}
\end{equation} where the weight $k>0$.  The sharpening factor $\alpha = 1+ k \frac{2}{w^2}$.  An immediate question is to the find the optimal value for $k$ to achieve the best balance of sharpening and flatness of the line (nonnegativity).  We have the following result,
\begin{theorem}
\label{upper}
 The upper bound value of $k$ for the weighted sharpening defined in Eq. (\ref{wtsh}) is $w_{\mathrm{opt}} = \frac{8}{9}w^2 $, in which case the sharpening factor is $\alpha  =  \frac{25}{9}$.
\end{theorem}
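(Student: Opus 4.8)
The plan is to follow the same pattern as the proof of Theorem~\ref{lower}: reduce the nonnegativity of $D_k(x)$ to a nonnegativity statement for a quadratic, and then optimize over $k$. Since the denominator $(x^2+w^2)^3$ in Eq.~(\ref{wtsh}) is strictly positive, $D_k(x)\ge 0$ for every $x$ if and only if the numerator $N_k(x)=x^4+2(w^2-3k)x^2+w^4+2kw^2$ is nonnegative for every $x$. The substitution $t=x^2\ge 0$ converts this into the requirement that the quadratic $P(t)=t^2+2(w^2-3k)t+w^4+2kw^2$ be nonnegative on $[0,\infty)$, so everything reduces to a one-variable parabola problem.

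Next I would split into cases according to where the vertex $t^{\star}=3k-w^2$ of $P$ sits relative to the feasible region. If $k\le w^2/3$, then $t^{\star}\le 0$, so $P$ is increasing on $[0,\infty)$ and $\min_{t\ge 0}P(t)=P(0)=w^4+2kw^2>0$; nonnegativity holds automatically and imposes no restriction. If $k>w^2/3$, the minimum over $t\ge 0$ is attained at $t^{\star}$, and a direct computation gives
\[
P(t^{\star}) \;=\; -(3k-w^2)^2 + w^4 + 2kw^2 \;=\; k\,(8w^2-9k).
\]
Because $k>0$, this is nonnegative exactly when $k\le \tfrac{8}{9}w^2$. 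Since $\tfrac13 w^2<\tfrac89 w^2$, the active constraint is precisely $k\le \tfrac{8}{9}w^2$, so the largest admissible weight is $k_{\mathrm{opt}}=\tfrac{8}{9}w^2$; at this value $P(t^{\star})=0$, meaning the sharpened curve just grazes zero and is as ``flat'' as possible while remaining nonnegative.

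Finally I would substitute $k_{\mathrm{opt}}$ into the sharpening factor $\alpha=1+\tfrac{2k}{w^2}$ recorded after Eq.~(\ref{wtsh}), obtaining $\alpha=1+\tfrac{2}{w^2}\cdot\tfrac{8}{9}w^2=1+\tfrac{16}{9}=\tfrac{25}{9}$, as claimed. I do not expect a genuine analytic obstacle: the only points requiring care are (i) tracking which critical point of $P$ actually lies in $[0,\infty)$, so that the minimum is evaluated on the correct domain, and (ii) observing that $\alpha$ is strictly increasing in $k$, so that ``optimal sharpening subject to nonnegativity'' is the same as pushing $k$ up to the boundary value $k_{\mathrm{opt}}$ — this is what justifies calling $k_{\mathrm{opt}}=\tfrac{8}{9}w^2$ the optimal (upper bound) choice.
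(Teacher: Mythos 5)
Your proposal is correct and follows essentially the same route as the paper: both reduce to the nonnegativity of the numerator $N_k(x)=x^4+2(w^2-3k)x^2+w^4+2kw^2$, split on whether $k\le w^2/3$, evaluate the minimum $-(3k-w^2)^2+w^4+2kw^2=k(8w^2-9k)$, and conclude $k\le\frac{8}{9}w^2$ with $\alpha=\frac{25}{9}$. Your substitution $t=x^2$ is a minor repackaging of the paper's direct critical-point computation on the quartic, not a genuinely different argument.
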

\begin{proof}
Following the similar argument in the proof of Theorem \ref{lower}, it is clear that if $w^2
\geq 3k, (k\leq \frac{w^2}{3})$, the term $N_k(x) = x^4+ 2(w^2-3k)x^2 + w^4+2k w^2 \geq 0$.  If $k> \frac{w^2}{3}$,
the zeros of $N_{k}'(x) = 4x^3-4(3k-w^2)x $ are $x = 0, x = \pm \sqrt{3k-w^2}$.  $N_k(x)$ obtain its absolute minimum at $x = \pm \sqrt{3k-w^2}$; $N_k(\pm \sqrt{3k-w^2}) =-(3k-w^2)^2 + w^4+2kw^2 = 8kw^2-9k^2 $. Hence $N_k(x)$ will be always nonnegative if $8kw^2-9k^2\geq 0$ or $k\leq \frac{8}{9}w^2$.   The optimal choice is $k_{\mathrm{opt}} = \frac{8}{9}w^2$ for the best sharpening enhancement, and the sharpening factor is $\alpha = 1+\frac{8}{9}w^2
\cdot \frac{2}{w^2} =  \frac{25}{9}$ which means that the sharpened peak is about 2.8 times higher yet narrower.  Please be noted that the value of $k$ is user preset and can be any number between 1 and $k_{\mathrm{opt}}$.
\end{proof}
The sharpening effects are depicted in Fig. \ref{Lor1}, the first plot shows a one-peak signal and the sharpening without weight where it can be seen that the sharpening is barely noticeable comparing to the original signal; while the second plot shows that the better performance by the weighted sharpening.  More plots in Fig. \ref{Lor2} demonstrate the results of weighted sharpening of a multi-peak signal as well as superimposition of multiple signals.
\begin{figure}
\includegraphics[height=4cm,width=8cm]{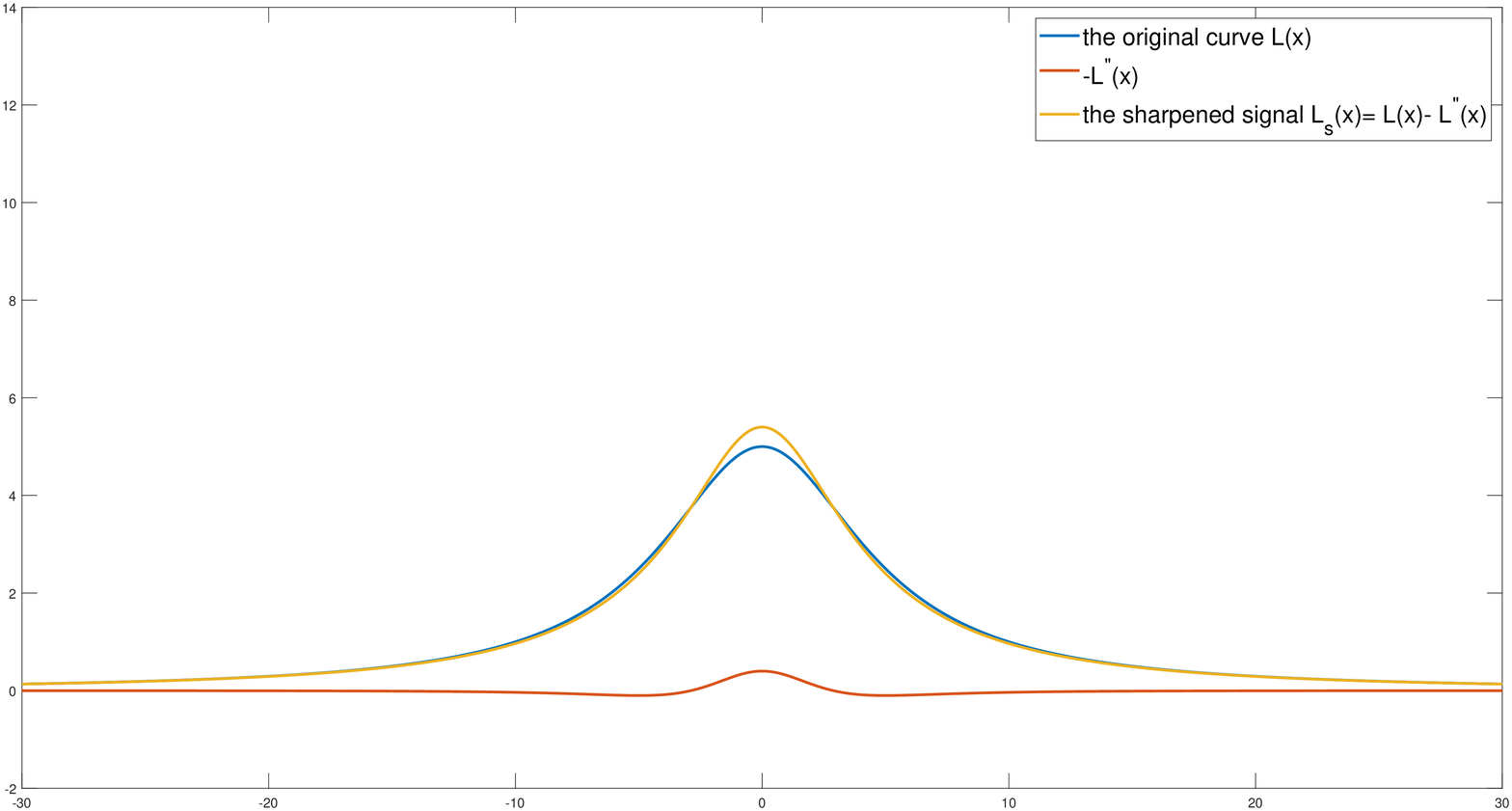}
\includegraphics[height=4cm,width=8cm]{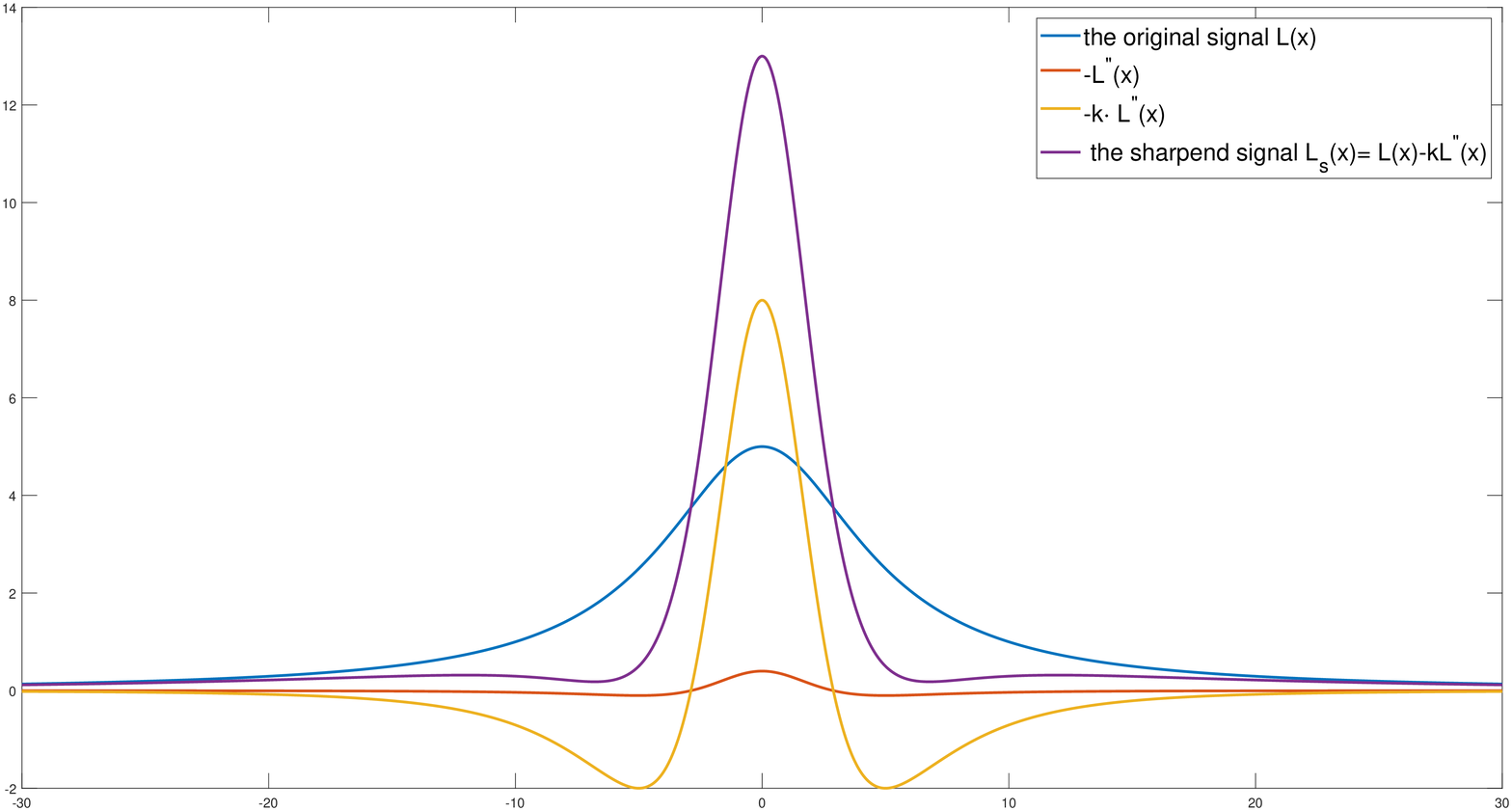}
\caption{Left panel shows a signal and sharpening without weight (or $k = 1$); Right panel is the same signal and weighted sharpening with $ k = 20$. Other parameters are $w = 5, h = 5$. }
\label{Lor1}
\end{figure}
\begin{figure}
\includegraphics[height=4cm,width=8cm]{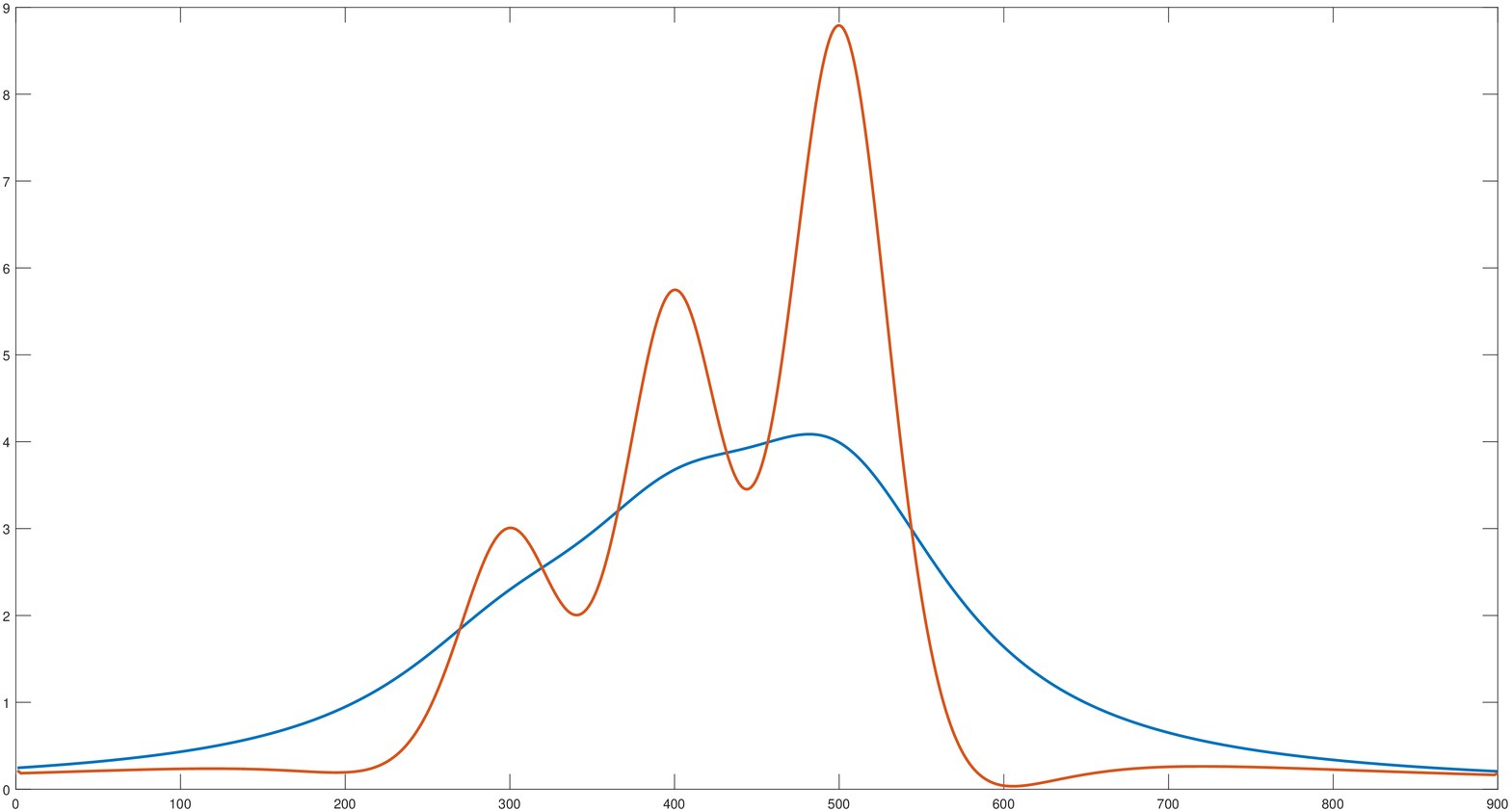}
\includegraphics[height=4cm,width=8cm]{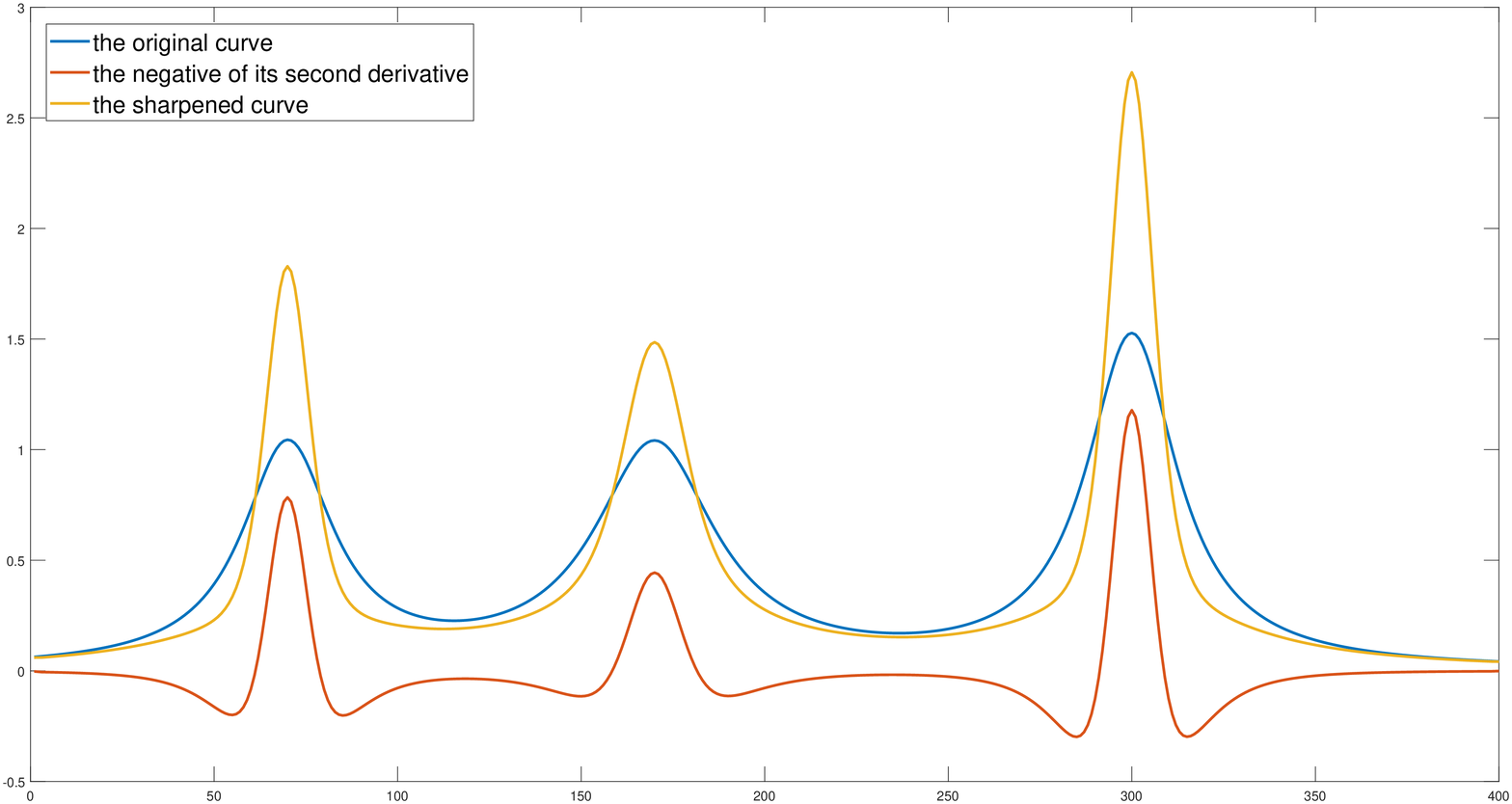}
\caption{Left panel shows a signal with three Lorentzian peaks; the negative of its second derivative, and the sharpening of its peaks. Right panel is a mixed signal formed from a combination of three signals. }
\label{Lor2}
\end{figure}

\subsection{Mixed Signal Sharpening and Separation}
We shall make the following definition
\begin{defi}
For a given signal $s(x)$, the weighted sharpening operator $\mathcal{P}$ is defined as
$\mathcal{P}s(x) = s(x)-ks''(x) $, where $k>0$ is the user preset weight parameter.
\end{defi}
The linearity of the operator follows from $ \mathcal{P}\left(as_1(x) + bs_2(x)  \right) = a\mathcal{P}s_1(x)+ b\mathcal{P}s_2(x)$.

Consider the linear mixture model equation (\ref{BSS}) $X = A S$, where $X\in \mathbb{R}^{m\times p}, A \in \mathbb{R}^{m\times b}, S \in \mathbb{R}^{n\times p} $. Rows of $X$ represents the measured spectral mixtures,
and rows of $S$ are the source signals. Matrix $A$ contains the mixing coefficients.  Each row $X_i$ can be expressed as
\begin{equation*}
  X_i = \sum_{j = 1}^{m}a_{ij}S_j\;.
\end{equation*}
Then we apply the weighted sharpening operator on $X_i$
\begin{equation*}
\mathcal{P}X_i = \mathcal{P} \sum_{j = 1}^{m}a_{ij}S_j = \sum_{j = 1}^{m}a_{ij}\mathcal{P} S_j
\end{equation*}
By the previous discussion, $\mathcal{P} S_j$ is the $j$th sharpened signal with narrower peaks of enhanced resolution than $S_j$, then the dominant peak condition is clearly much better satisfied.  We shall sharpen all the mixed signals (all the rows of matrix $X$) to have the following preprocessed data (which can be formally written)
\begin{equation*}
  \hat{X} = \mathcal{P} X = \mathcal{P}AS  = A\hat{S}\;,
\end{equation*}
$\hat{S} = \mathcal{P}S$ each row of which represents a sharpened source signal.   For the half width at half maximum parameter $w$ used in the selection of the weight $k$ (since $k_{\mathrm{opt}} = \frac{8}{9}w^2$) for numerical implementation, an estimate of the narrowest peak width suffices.  One can read off approximate value from mixture signals if the dominant interval(s) happen to contain a peak.  In more complicated NMR data, the expertise of an analytical chemist may also be helpful to estimate this
parameter.

Once the rows of the mixture matrix $X$ being preprocessed (peaks sharpened), we then apply the NN method on $\hat{X} = A\hat{S}$ to retrieve the columns of $A$ by solving either problem (\ref{LPNF}) or (\ref{NNCG}).  In the presence of noise,  the following optimization problem is suggested to solve for an estimate the mixing matrix $A$ {\allowdisplaybreaks
\begin{equation}
\label{LPNP2}
\mathrm{score} =\min_{\lambda_j \geq 0} \frac{1}{2} \|\sum^{p}_{j = 1, j\neq
k}\hat{X}^{j} \lambda_j - \hat{X}^{k}
 \|^2_2\;, k = 1,\dots,p\;\\
  \;,
\end{equation}
which can be solved by nonnegative least squares method.  A column with a low
score is unlikely to be a column of $A$ because this column is
approximately a nonnegative linear combination of the other columns of
$X$; while a high score may suggest that the corresponding
column is far from being a nonnegative linear combination of other
columns of $X$.  In practice, the $n$ columns from $X$ with highest
scores will be selected as an estimate of $A$.  In NN method, the Moore-Penrose inverse $A^{+}$ of $A$ is computed and used to obtain an estimate of the source signal $S$: $S = A^{+}X $.  The recovered $S$ might contain negative values due to the error in the estimate of $A$.  For a remedy, if $m\geq n$ (over-determined), then a nonnegative least squares method can be adopted for solving the source matrix $S$; for each column $S^i$ of $S$, solve the problem
\begin{equation*}
 \min_{S^i\geq 0}\frac{1}{2} \|X^i - AS^i \|^2_2\;.
\end{equation*}
If $m<n$ (under-determined), the solution of $S$ is non-unique, but one can solve a nonnegative $\ell_1$ optimization problem for a sparse solution of $S^i$,
\begin{equation}
\label{l1BSS}
 \min_{S^i\geq 0}\frac{1}{2}\|X^i - AS^i \|^2_2 + \mu \|S^i\|_1\;.
\end{equation}
We shall assign a tiny value to $\mu$ when there is minimal measurement error to heavily weigh the term $\|X^i - AS^i \|^2_2$ so that $X^i = AS^i $ is nearly satisfied.  To solve (\ref{l1BSS}) , we may use linearized Bregman method \cite{G_O_09,YO} with a proper projection onto nonnegative convex set.

\section{Numerical Experiments}
We report in this section the numerical results of the proposed method.  Hereafter, NN method is the convex cone method without sharpening preprocessing, while the term NNP method is NN method with peak sharpening.  First example contains synthetic data, there are two mixture and two source signals ($m = n = 2$).  The source spectra are synthesized using Lorentzian shaped peaks to mimic the real NMR spectra, the mixture matrix are generated by the model $X = AS$.   Fig. \ref{ex1Source} shows the source spectra, while the left panel in Fig. \ref{ex1mix} is a mixed signal and its sharpening.  We also show the scattered cloud of the columns of $X$ in the right panel of  Fig. \ref{ex1mix}.  The recovered source spectra by NN method and NNP method are depicted in Fig. \ref{Ex1SourceNN}.  Both methods recovered the source signals rather well comparing to the ground truth.  The estimate of
mixing matrix $A_{\mathrm{NN}}$ by NN method, $A_{\mathrm{NNP}}$ by NNP, and the true mixing matrix $A_{\mathrm{TR}}$ (noted that first rows of all matrices are scaled to the same for the purpose of illustration) are shown and compared below.
\begin{equation*}
  A_{\mathrm{TR}} =\left(
   \begin{array}{cccc}
    0.6  &  0.8\\
    0.8  &  0.6
   \end{array}
 \right)\;,\;
A_{\mathrm{NN}} =\left(
   \begin{array}{cccc}
    0.6 &   0.8\\
    0.7478 &   0.6427
    \end{array}
    \right)\;,\;
    A_{\mathrm{NNP}} =\left(
   \begin{array}{cccc}
    0.6 &  0.8\\
    0.7890  &  0.6085
    \end{array}
    \right)\;
 \end{equation*}

To compare the performance of mixing matrix estimates, we calculate Comon's index defined here.
\begin{defi}
Consider two nonsingular matrices let $A$ and $\hat{A}$ with normalized columns.  The distance between $A$ and $\hat{A}$ denoted by $\varepsilon(A,\bar{A})$ which is
\begin{equation*}
\varepsilon(A,\bar{A}) = \sum_i\biggl | \sum_j |d_{ij}|-1\biggr |^2 + \sum_j\biggl | \sum_i |d_{ij}|-1\biggr |^2
 + \sum_i\biggl | \sum_j |d_{ij}|^2-1\biggr | + \sum_j\biggl | \sum_i |d_{ij}|^2-1\biggr |\;,
\end{equation*}
where $D =A^{-1}\bar{A} $, and $d_{ij}$ is the entry of $D$.
\end{defi} Comon proved in \cite{Comon} that
$A$ and $\bar{A}$ are considered nearly equivalent in BSS problems  if $\epsilon(A,\bar{A}) \approx 0$.  We computed Comon's index between the true mixing matrix and estimates by NN method and NNP method
\begin{equation*}
\varepsilon(A_{\mathrm{TR}},A_{\mathrm{NN}}) = 0.8012\;,\;  \varepsilon(A_{\mathrm{TR}},A_{\mathrm{NNP}}) = 0.1818.
\end{equation*}
$A_{\mathrm{NNP}}$ is much closer to $A_{\mathrm{TR}}$, implying a better estimate.  We also studied the relation of the sharpening weight $w$ and separation results : we let $k$ vary from $5$ to $100$, and computed the Comon's indices and showed the curve in Fig. \ref{dependenceONk}.

\begin{figure}
\includegraphics[height=4.7cm,width=16cm]{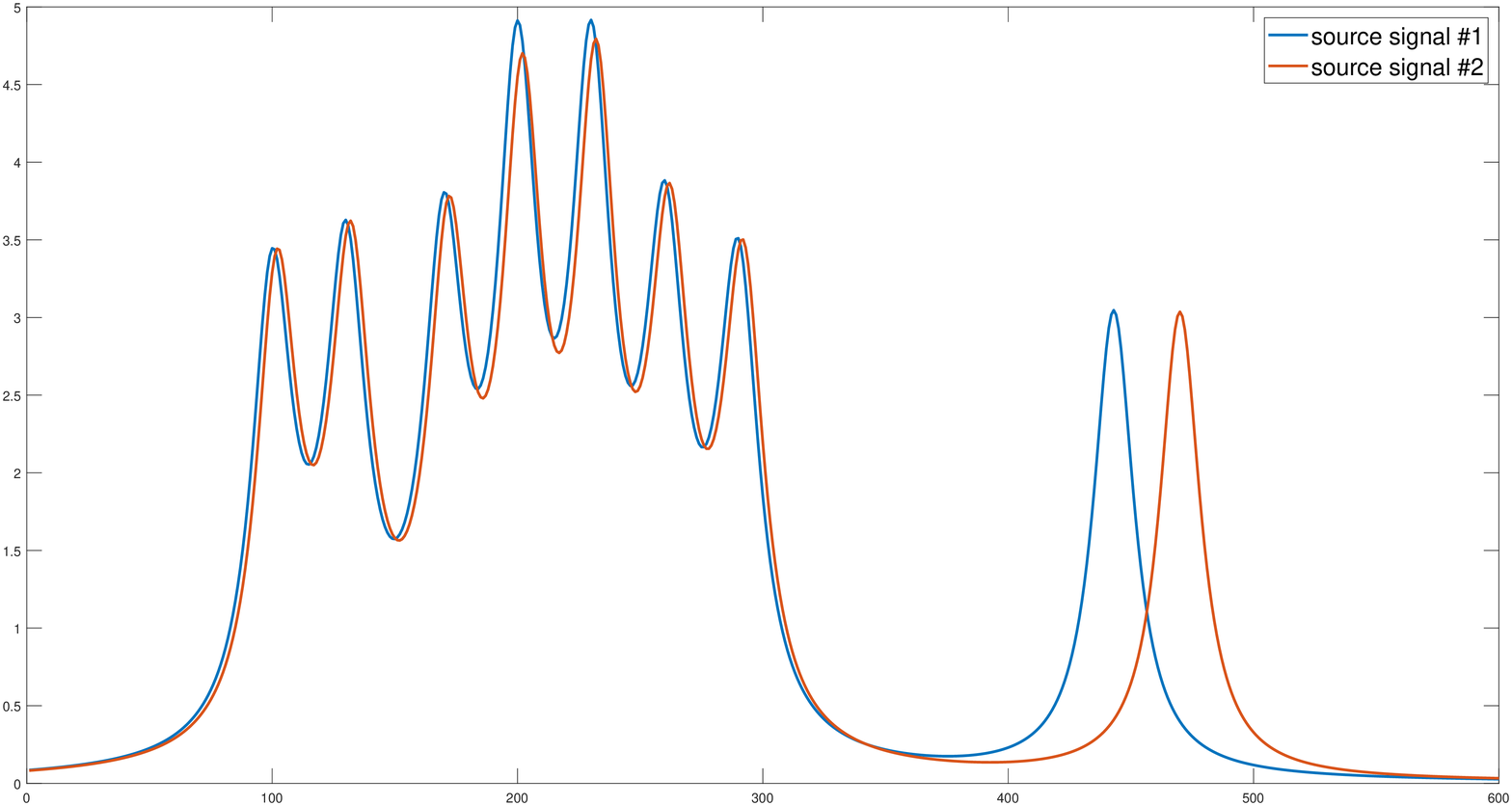}
\caption{The two source signals synthesized in example one. It can be seen that they share majority of their spectral components, the two dominant peaks are located to right side.}
\label{ex1Source}
\end{figure}
\begin{figure}
\includegraphics[height=4cm,width=8cm]{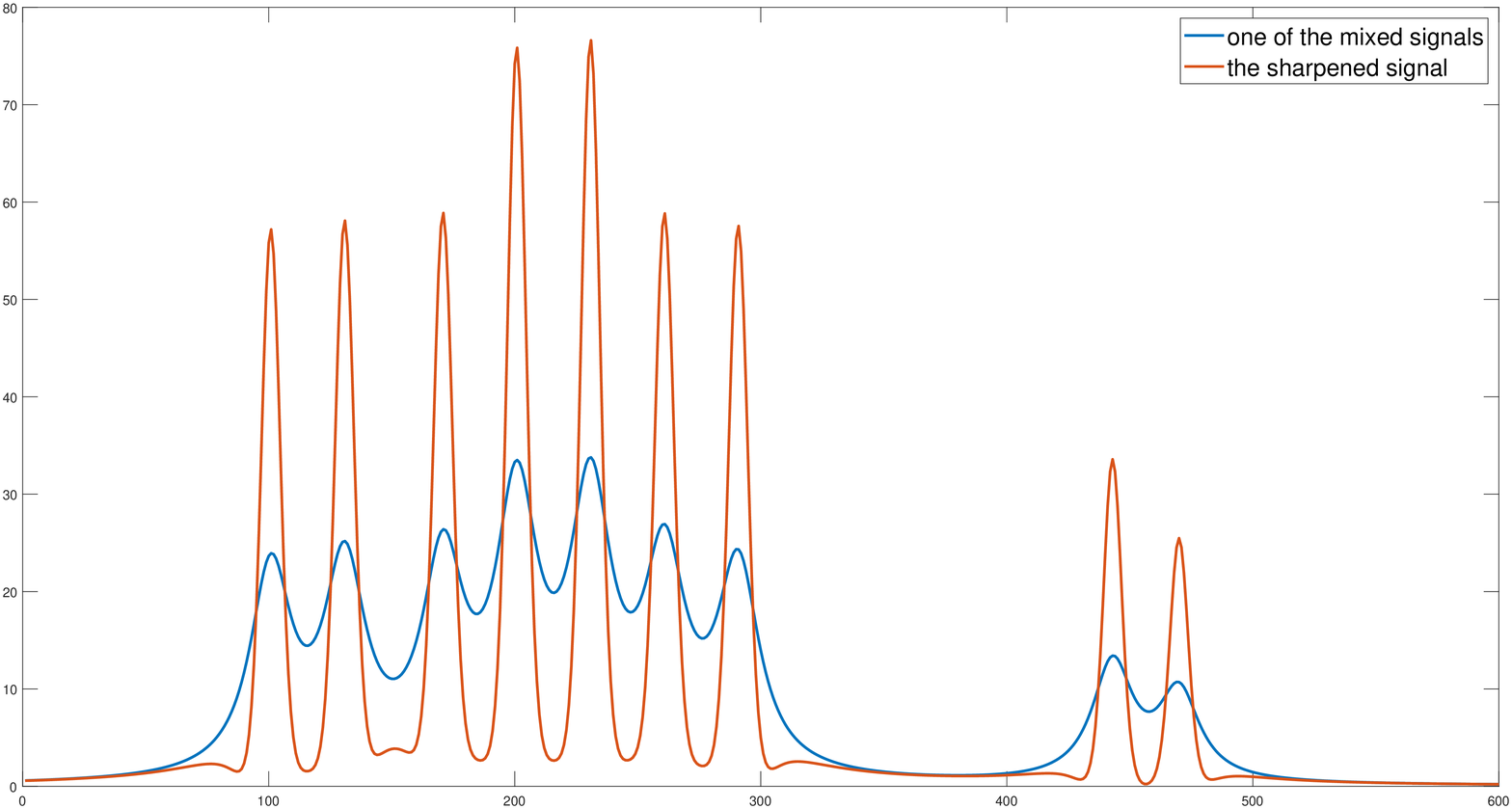}
\includegraphics[height=4cm,width=8cm]{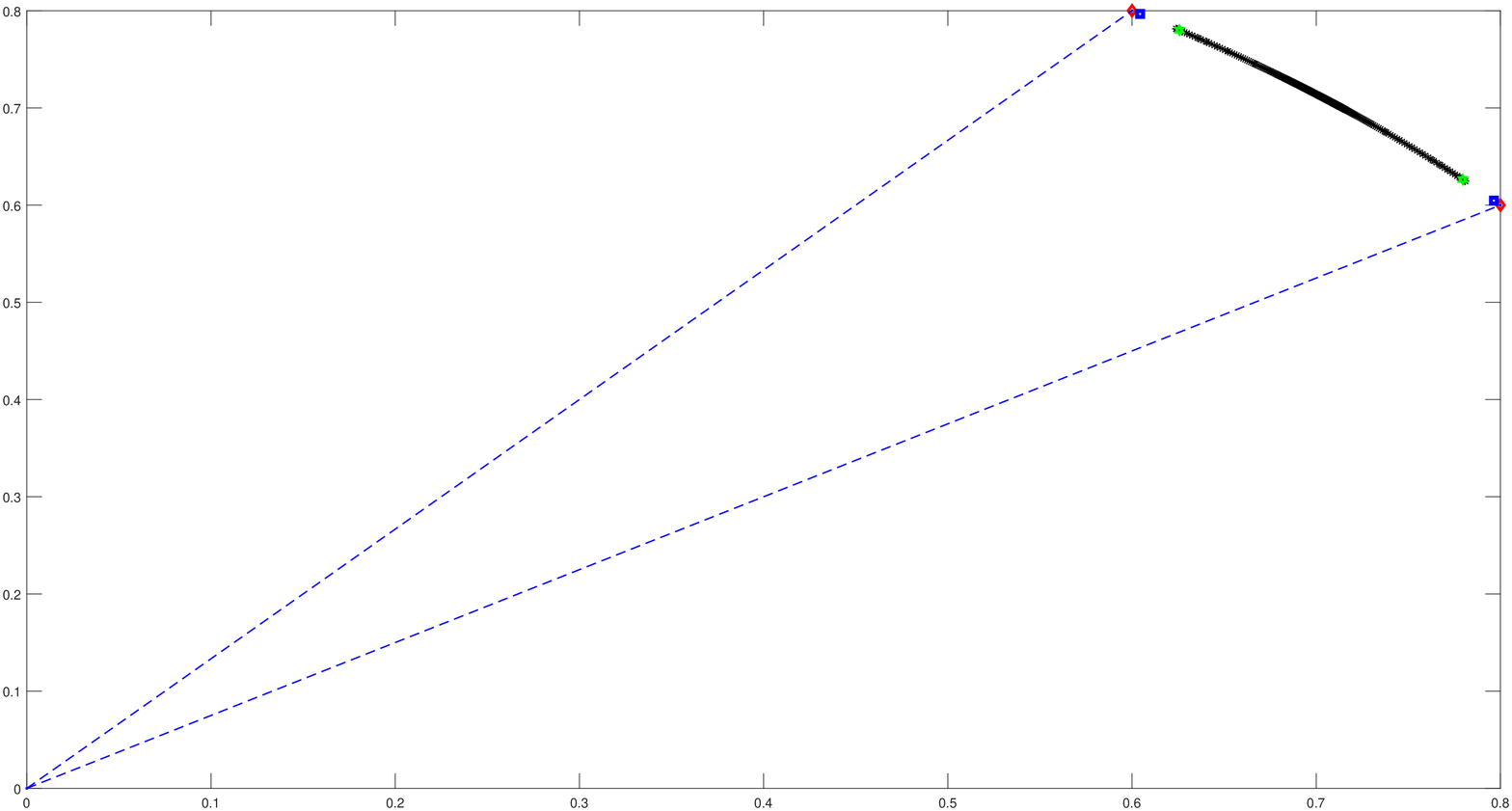}
\caption{Left: a mixed signal and its sharpening. Right: Columns of $X$ indicated by {\bf black stars}.  NN method identified the columns of mixing matrix as the vertices ({\color{green} green triangles}) of a convex cone enclosing the columns of $X$ , while NNP found the {\color{blue}blue circles}.  {\color{red}Red diamonds} represent the columns of true mixing matrix. }
\label{ex1mix}
\end{figure}

\begin{figure}
\includegraphics[height=4cm,width=8cm]{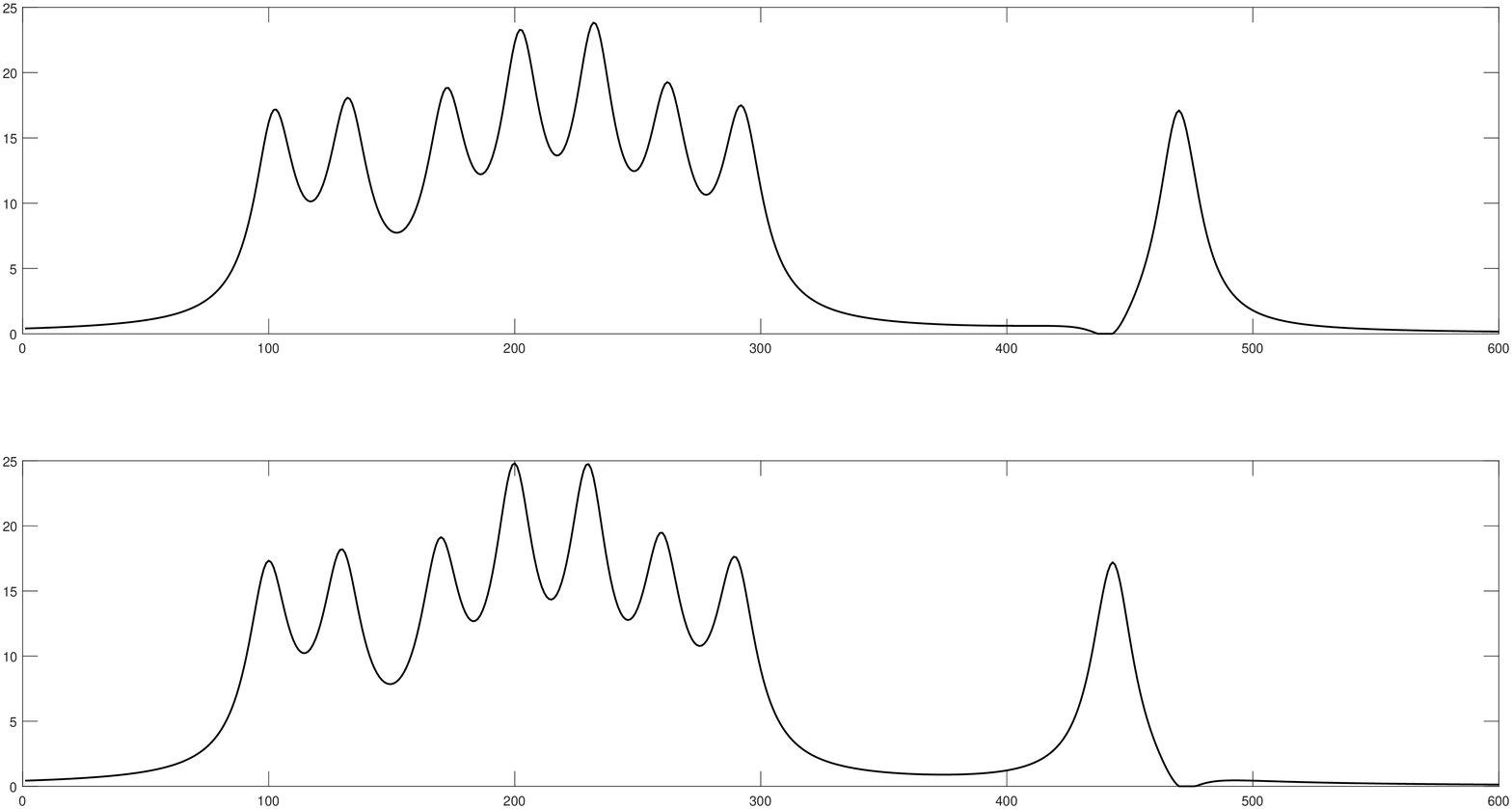}
\includegraphics[height=4cm,width=8cm]{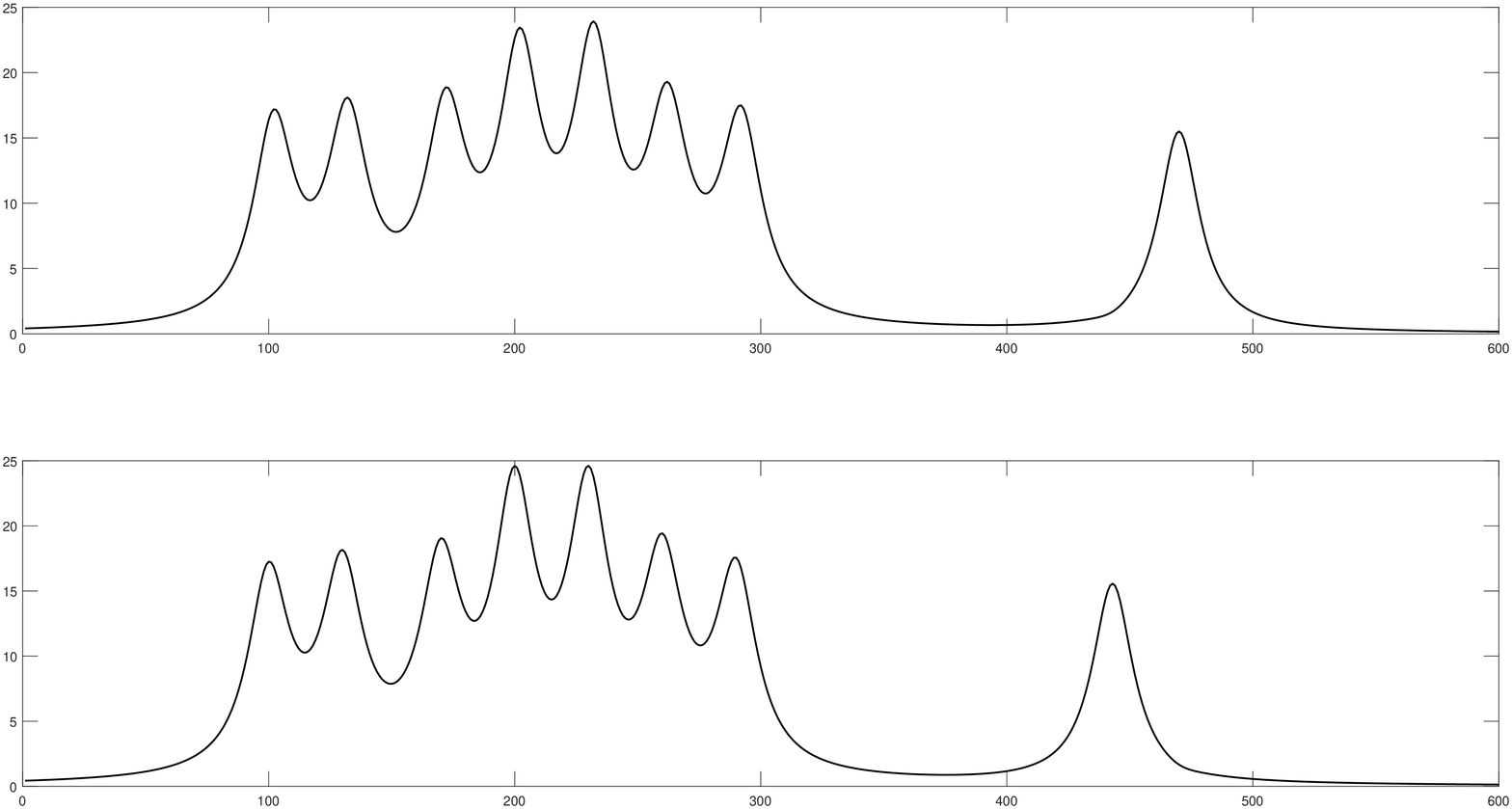}
\caption{left: recovered source signals by NN method; Right: recovered source signals by NNP method. }
\label{Ex1SourceNN}
\end{figure}

\begin{figure}
\includegraphics[height=5cm,width=16cm]{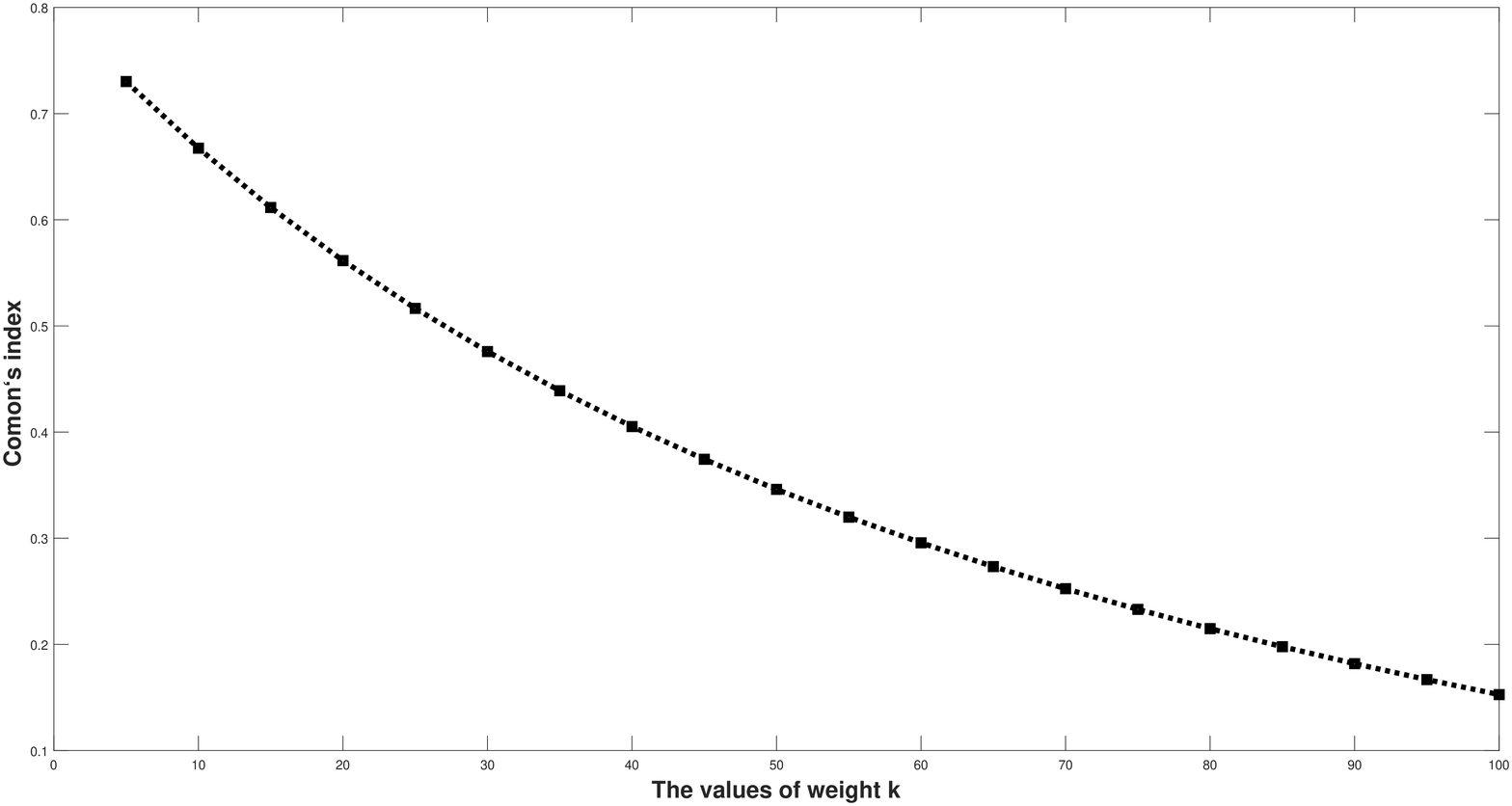}
\caption{Comon's indices v.s. the sharpening weights.}
\label{dependenceONk}
\end{figure}

In the second example, we present the numerical results of three mixtures and three sources signals.  With the concept of Comon's index, we show the robust performances of
NNP method for noisy spectral data.  The three sources signals in Fig. \ref{eg2_1}
were linearly combined to generate three mixtures, and then
Gaussian noises with SNR varying from 30 to 120 dB were added.  Figure \ref{eg2_4} indicates the robustness of our
method with small indices even in the low SNR zone.  The comparison of the recovered mixing matrix by NN method,  NNP method, and the ground truth are shown here
\begin{equation*}
  A_{\mathrm{TR}} =\left(
   \begin{array}{cccc}
   0.6667  &  0.2727 &   0.2000 \\
    0.2222 &   0.4545 &   0.3000\\
    0.1111 &   0.2727 &   0.5000
   \end{array}
 \right)\;,\;
A_{\mathrm{NN}} =\left(
   \begin{array}{cccc}
    0.6667  & 0.2727 &   0.2000\\
    0.2793  &  0.3875 &  0.2904\\
    0.1672  &  0.2688 &   0.4428
      \end{array}
    \right)\;,\;
     \end{equation*}
\begin{equation*}
     A_{\mathrm{NNP}} =\left(
   \begin{array}{cccc}
          0.6667 &   0.2727 & 0.2000\\
           0.2416 &    0.4551 & 0.3039\\
           0.1312 &   0.3044 & 0.5082
 \end{array}
    \right)\;
 \end{equation*}
Comon's index between the true mixing matrix and estimates by NN method and NNP method here
\begin{equation*}
\varepsilon(A_{\mathrm{TR}},A_{\mathrm{NN}}) = 1.3362\;,\;  \varepsilon(A_{\mathrm{TR}},A_{\mathrm{NNP}}) = 0.4952.
\end{equation*}
Clearly the NN method with sharpening preprocessing delivers better results.  Figs. \ref{eg2_1}--\ref{eg2_4} show the computational results for the readers' perusal.  The sharpening parameter we used in this example is $k = 40$ which proves to work well.  It can be seen that both methods are able to capture the peaks and their locations of the source signals as shown in Fig. \ref{eg2_3}, a closer look at the comparison with the real source signal in the left plot of Fig. \ref{eg2_4} clearly shows the better performance of NNP method.  The regions marked by arrows are the discrepancies of the result of NN method with the ground truth.  The similarity between the signals measured by their inner products are calculated $\mathrm{sim}(s_{\mathrm{NN}},s_{\mathrm{TR}})= 0.9767, \mathrm{sim}(s_{\mathrm{NNP}},s_{\mathrm{TR}})= 0.9998 $.

\begin{figure}
\includegraphics[height=4cm,width=8cm]{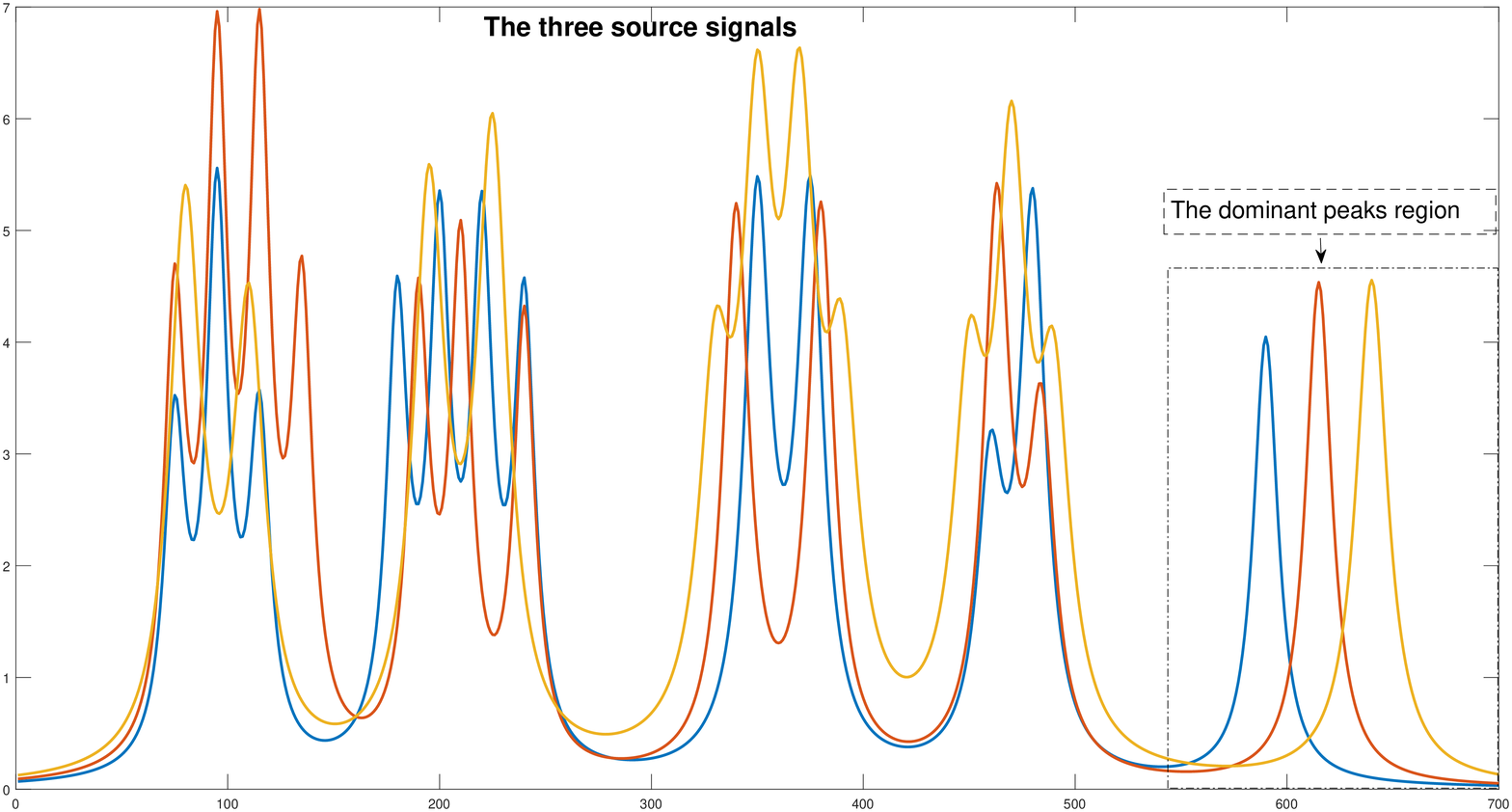}
\includegraphics[height=4cm,width=8cm]{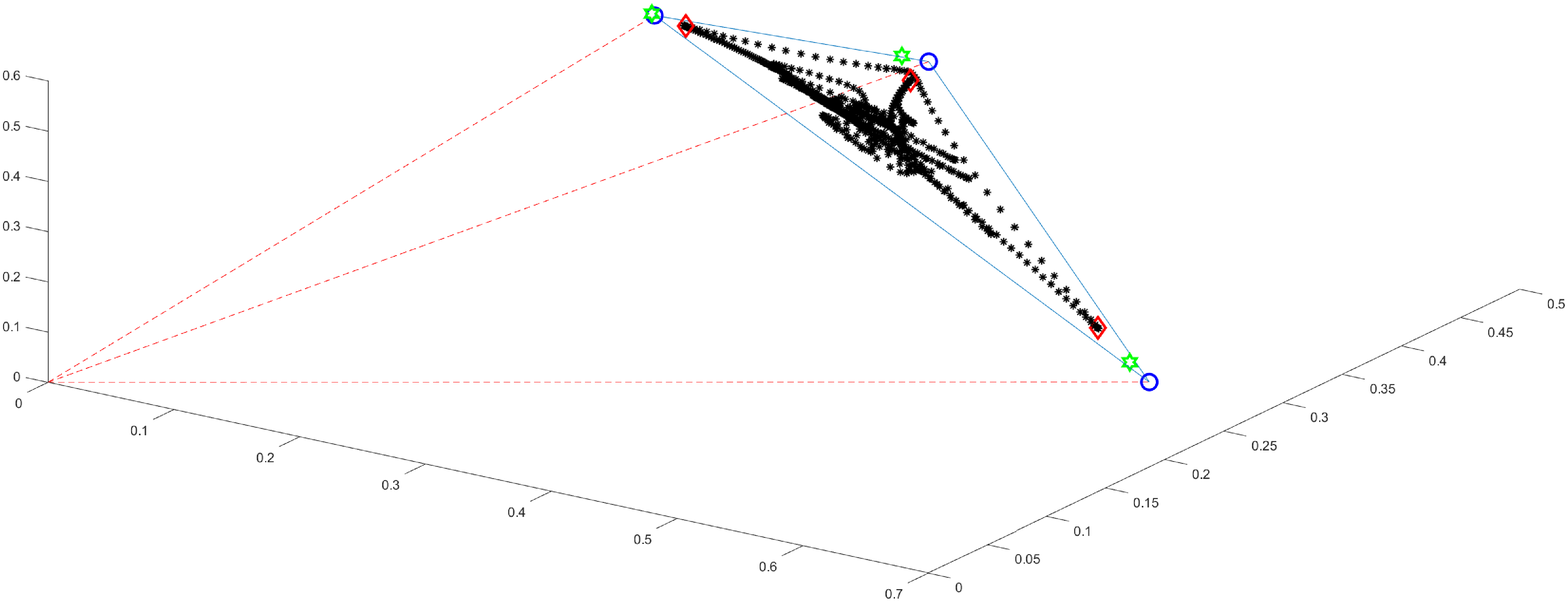}
\caption{Left: Three positive and overlapped Lorentzian source signals with dominant peaks which are shown in the rectangle.  Right: Comparison of
the three columns of mixing matrix recovered from NN with those of the true mixing matrix A (shown in {\color{blue} blue circles}). NN method identifies the columns of mixing matrix as the
edges of a minimal cone enclosing the mixtures (depicted by {\color{red} red diamonds}). The deviation of NN’s results is due to the violation of the condition SAP.  With a preprocessing peak sharpening, NNP method delivers a better results ({\color{green} green stars}) being closer to the {\color{blue} blue circles}).}
\label{eg2_1}
\end{figure}

\begin{figure}
\includegraphics[height=4cm,width=8cm]{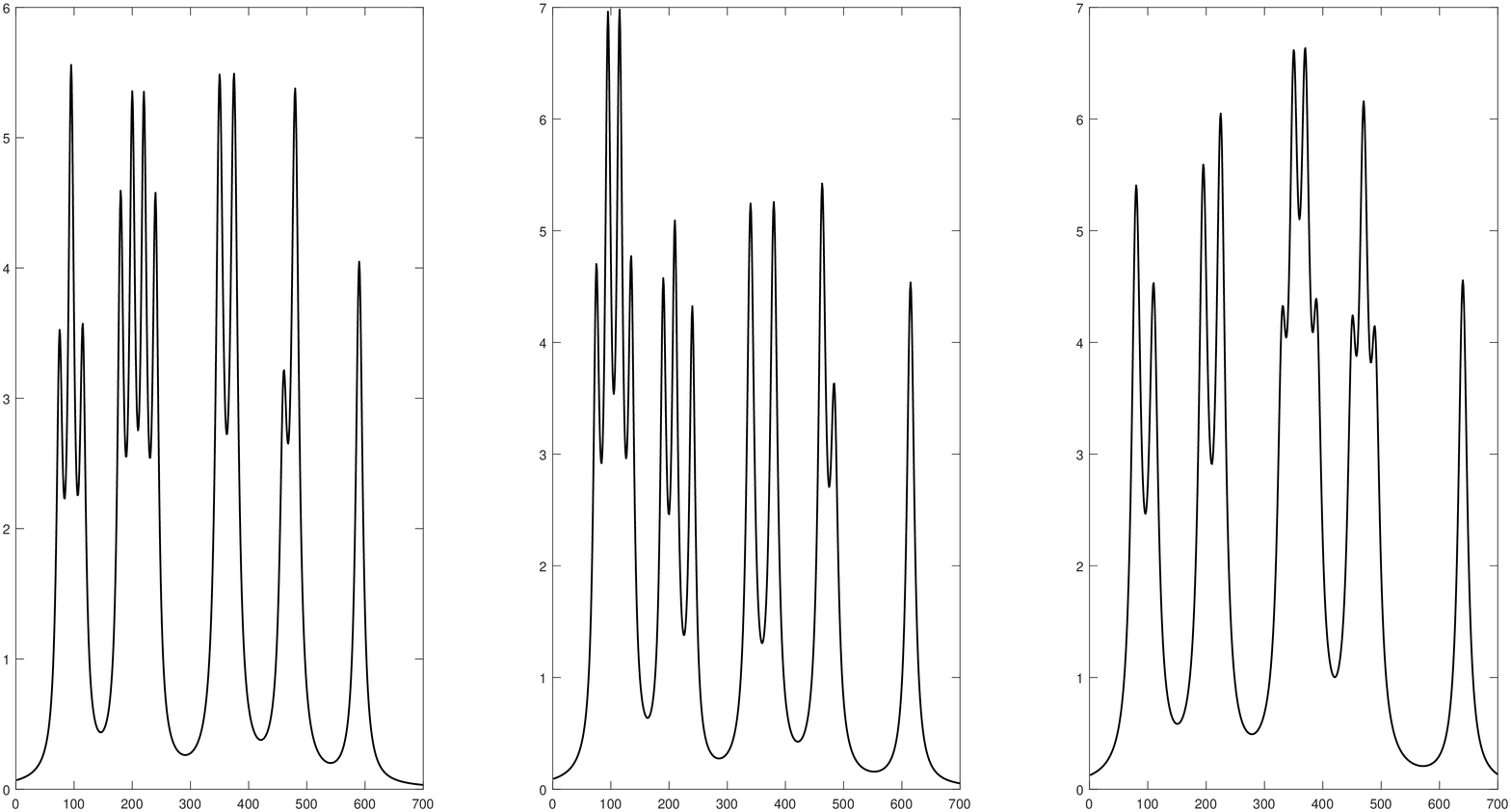}
\includegraphics[height=4cm,width=8cm]{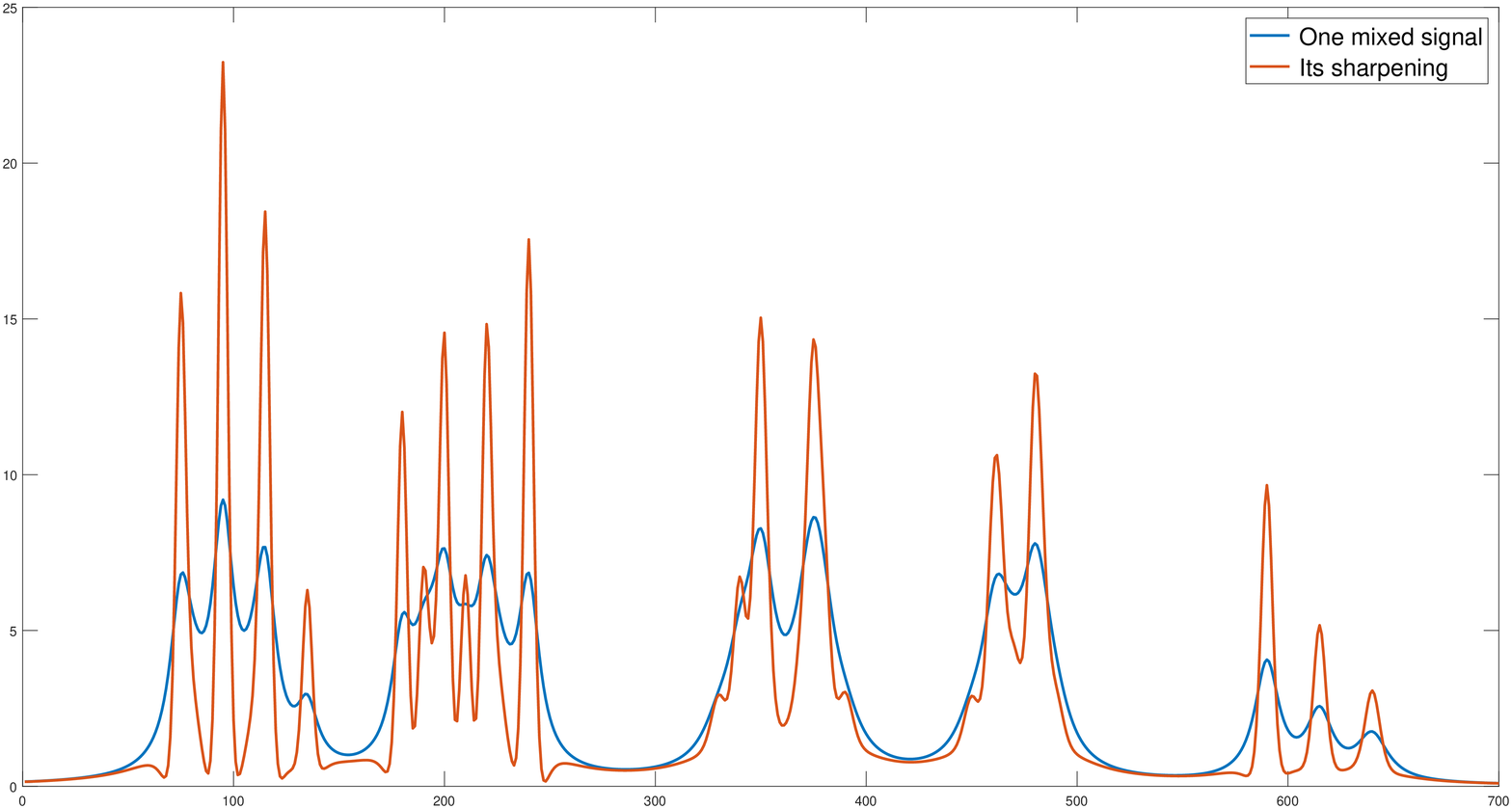}
\caption{Left: the real source signals; Right: one of the signals and its sharpening.}
\label{eg2_2}
\end{figure}

\begin{figure}
\includegraphics[height=4cm,width=8cm]{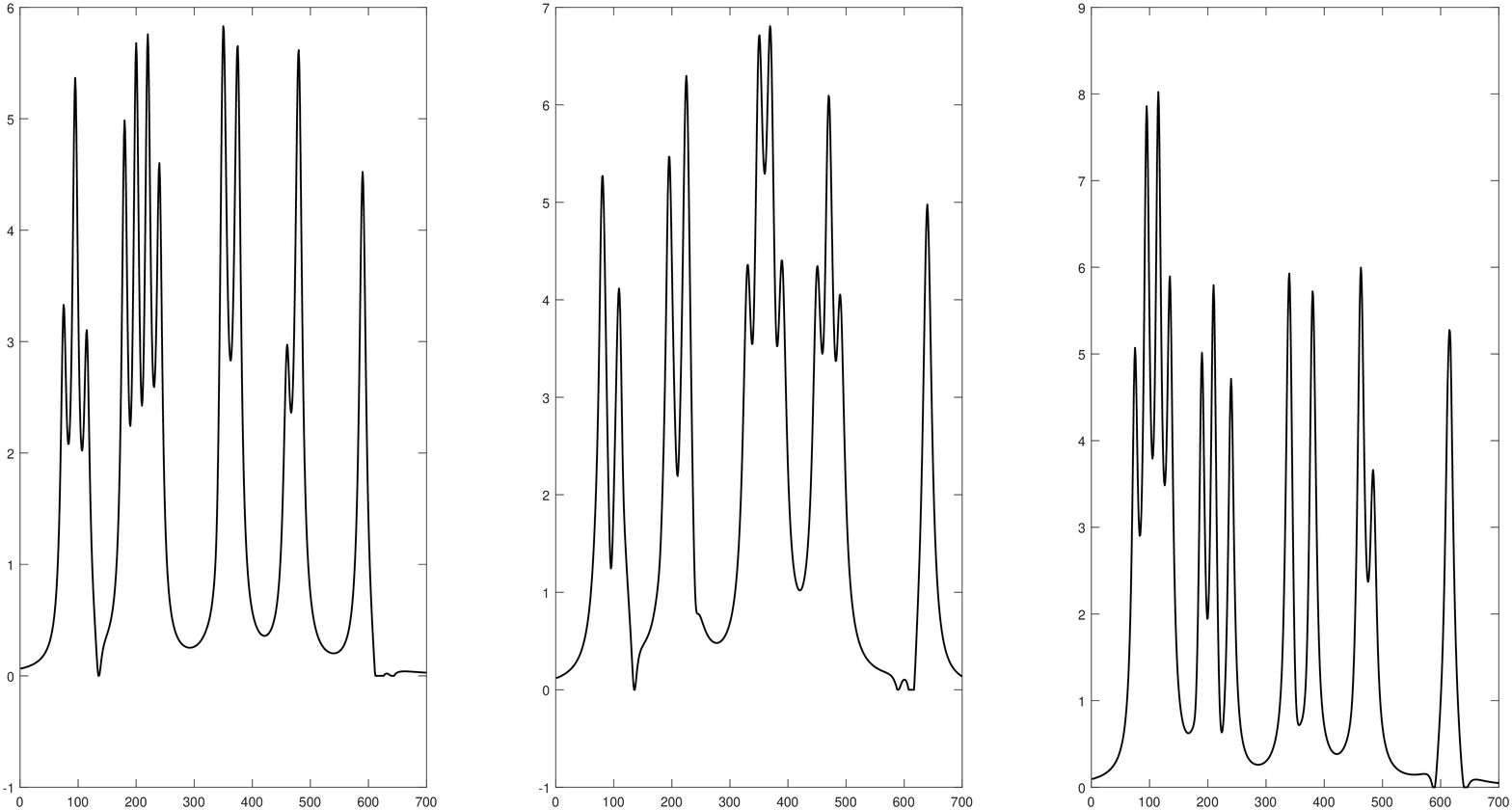}
\includegraphics[height=4cm,width=8cm]{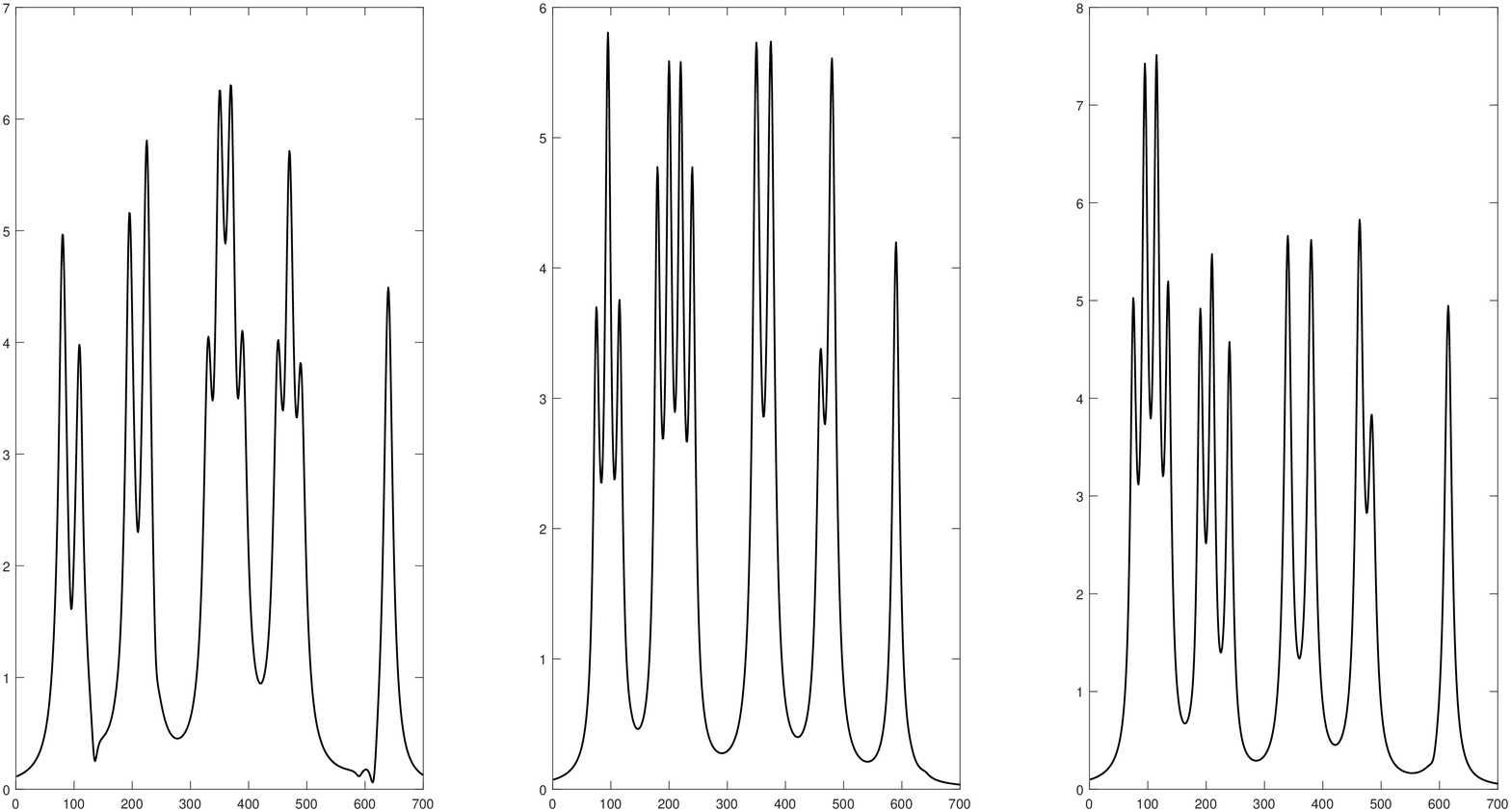}
\caption{Left: the computed source signals by NN method.  Right: the computed source signals by NNP method.}
\label{eg2_3}
\end{figure}

\begin{figure}
\includegraphics[height=5cm,width=9cm]{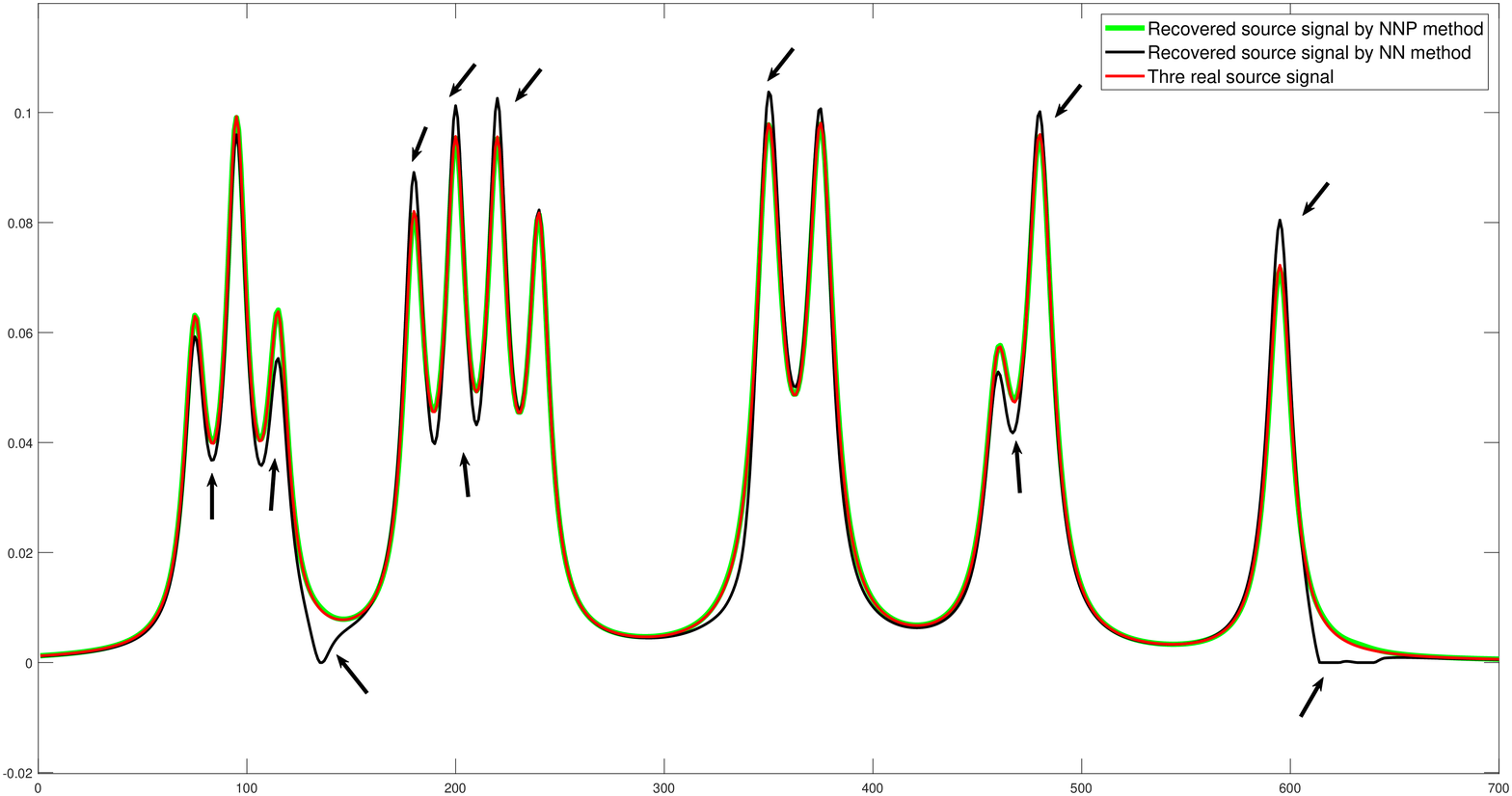}
\includegraphics[height=5cm,width=8cm]{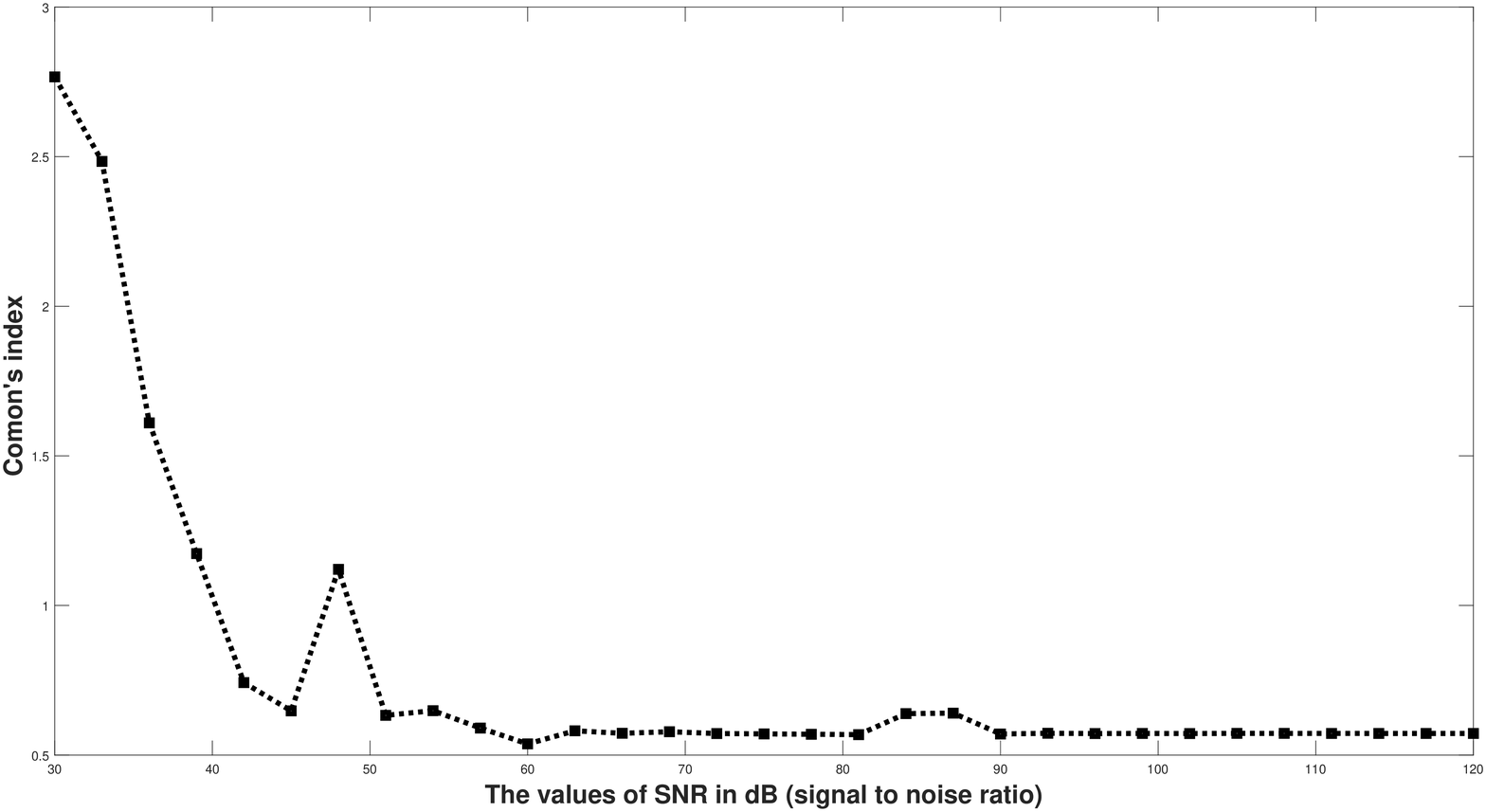}
\caption{Left: The closeness of the recovered source signals and the real one; Right: Robust performance of NNP in the presence of noise.}
\label{eg2_4}
\end{figure}


Next we test the method with real world NMR experimental data.  In Fig. \ref{realNMR} there are three mixtures, each is formed by a linear combination of three 4-peak source signals.  The plot in Fig. \ref{realNMRshp} shows one mixed signal and its sharpening. The three source signals computed by the two methods are shown in the three plots from Fig. \ref{RecNNvsNNP}. Although there are small spurious noisy peaks (or artifacts) around in the results, the four major peaks of signals are well captured and recognizable by both NN method and NNP method.  The second and third plots of Fig. \ref{RecNNvsNNP} show rather similar results by the two methods.  In the first plot, we observe two noticeable bleed through peaks in the signal recovered by NN method, while the two peaks can be barely seen in NNP recovery.
In this example, an estimate of the lower bound of the half peak width $w = 4$ (the narrowest peak) is obtained by examining the mixture signals, we chose the sharpening weight parameter $k = 10$.   Here we also present the recovered mixing matrices by the two methods (note that we do not have ground truth matrix to compare with)
\begin{equation*}
 A_{\mathrm{NNP}} =\left(
   \begin{array}{cccc}
    0.7601 &   0.7454 & 0.8675\\
    0.6481 &   0.3659 & 0.0496\\
    0.0473 &   0.5573 & 0.4949
    \end{array}
    \right)\;,\;
    A_{\mathrm{NNP}} =\left(
   \begin{array}{cccc}
    0.7189  &  0.8741 &    0.7616\\
    0.6952   & 0.0398 &   0.3640\\
         0 &   0.4841 &   0.5362
    \end{array}
    \right)\;
 \end{equation*}
\begin{figure}
\includegraphics[height=4cm,width=14cm]{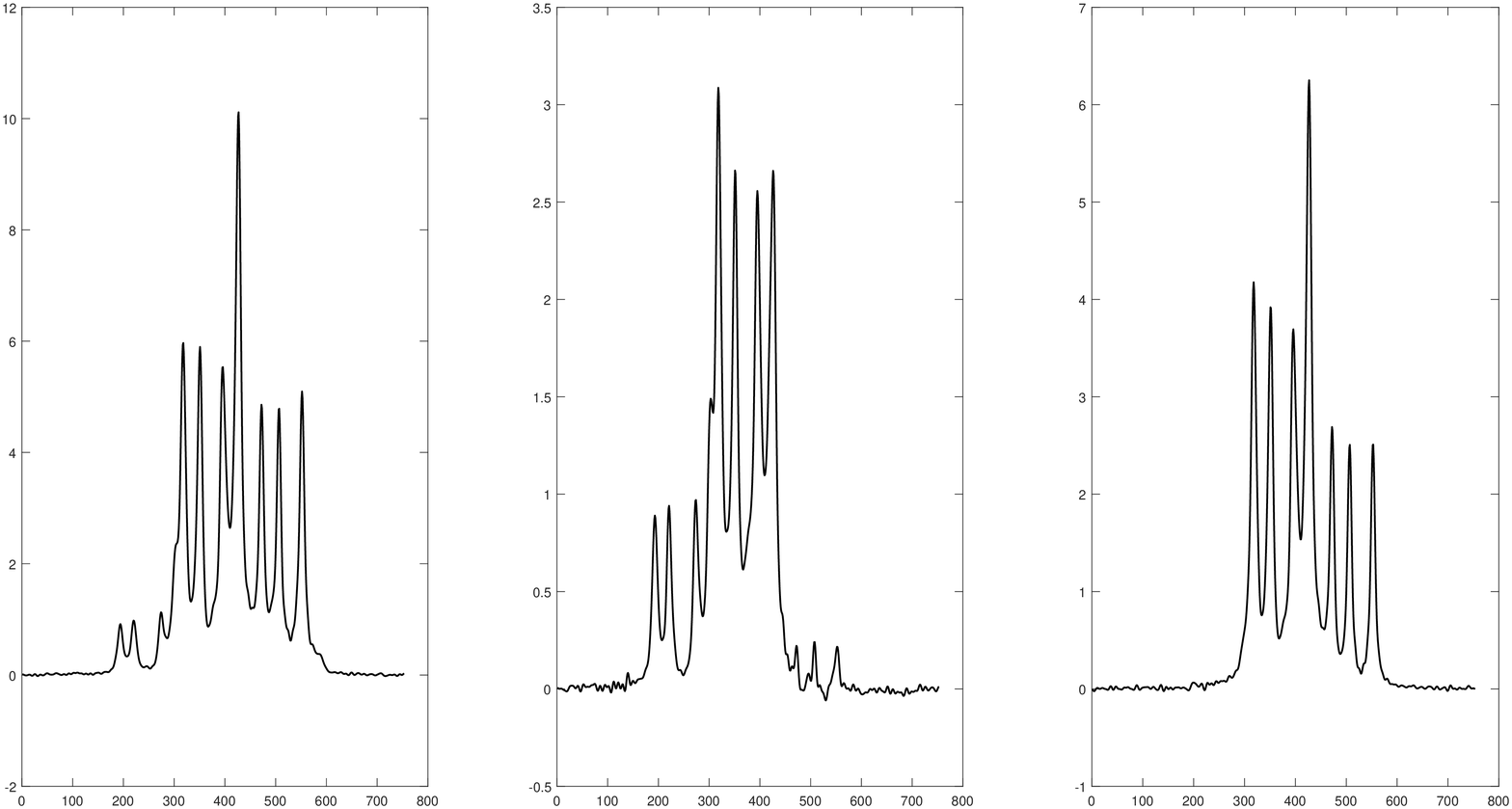}
\caption{Three mixed realistic NMR spectra (from three sources).}
\label{realNMR}
\end{figure}
\begin{figure}
\includegraphics[height=4cm,width=14cm]{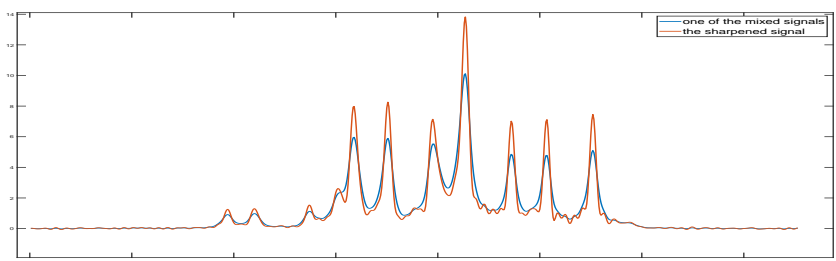}
\caption{One of the mixed NMR signal and its sharpening.}
\label{realNMRshp}
\end{figure}
\begin{figure}
\includegraphics[height=4cm,width=7cm]{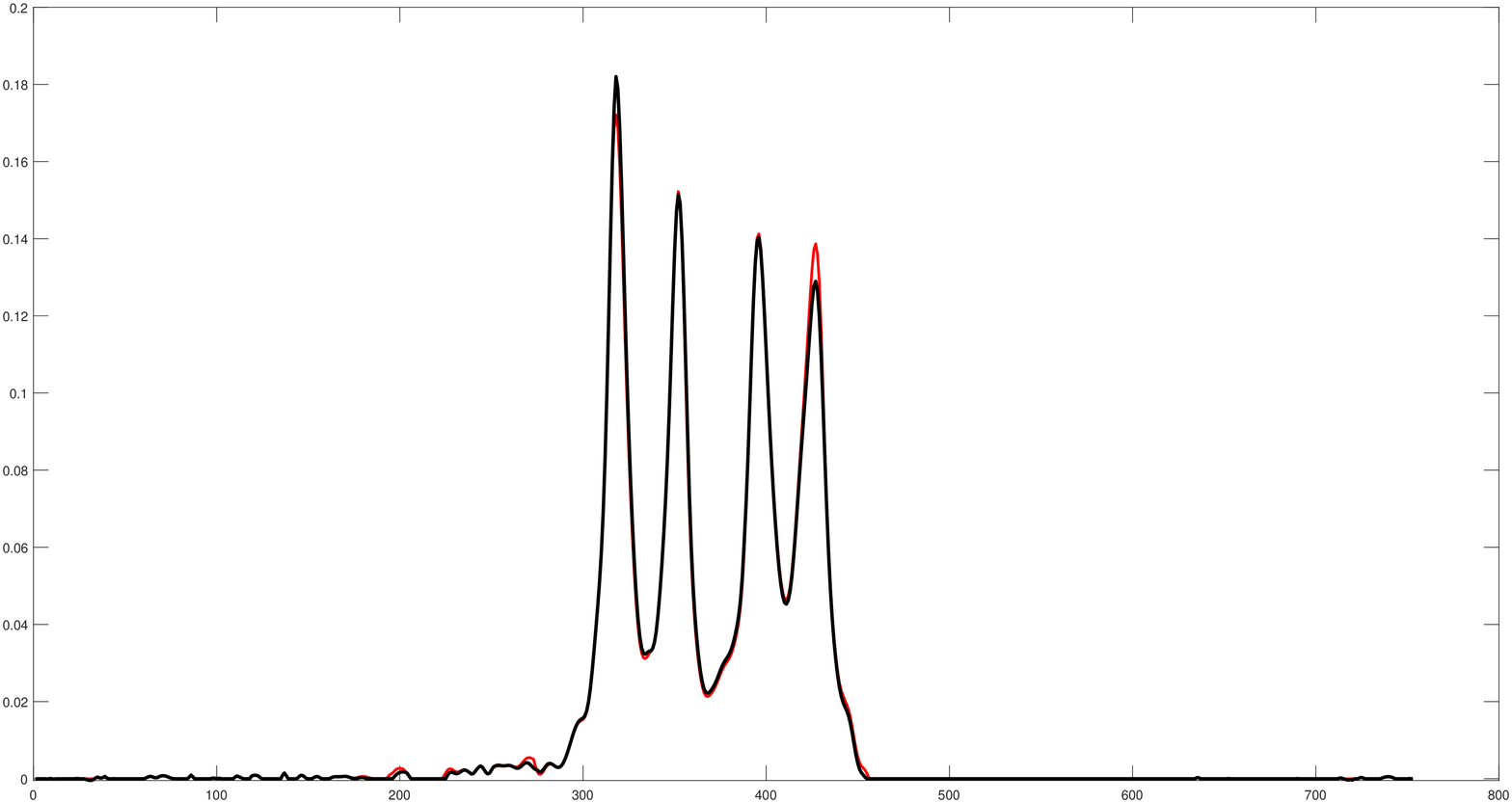}
\includegraphics[height=4cm,width=7cm]{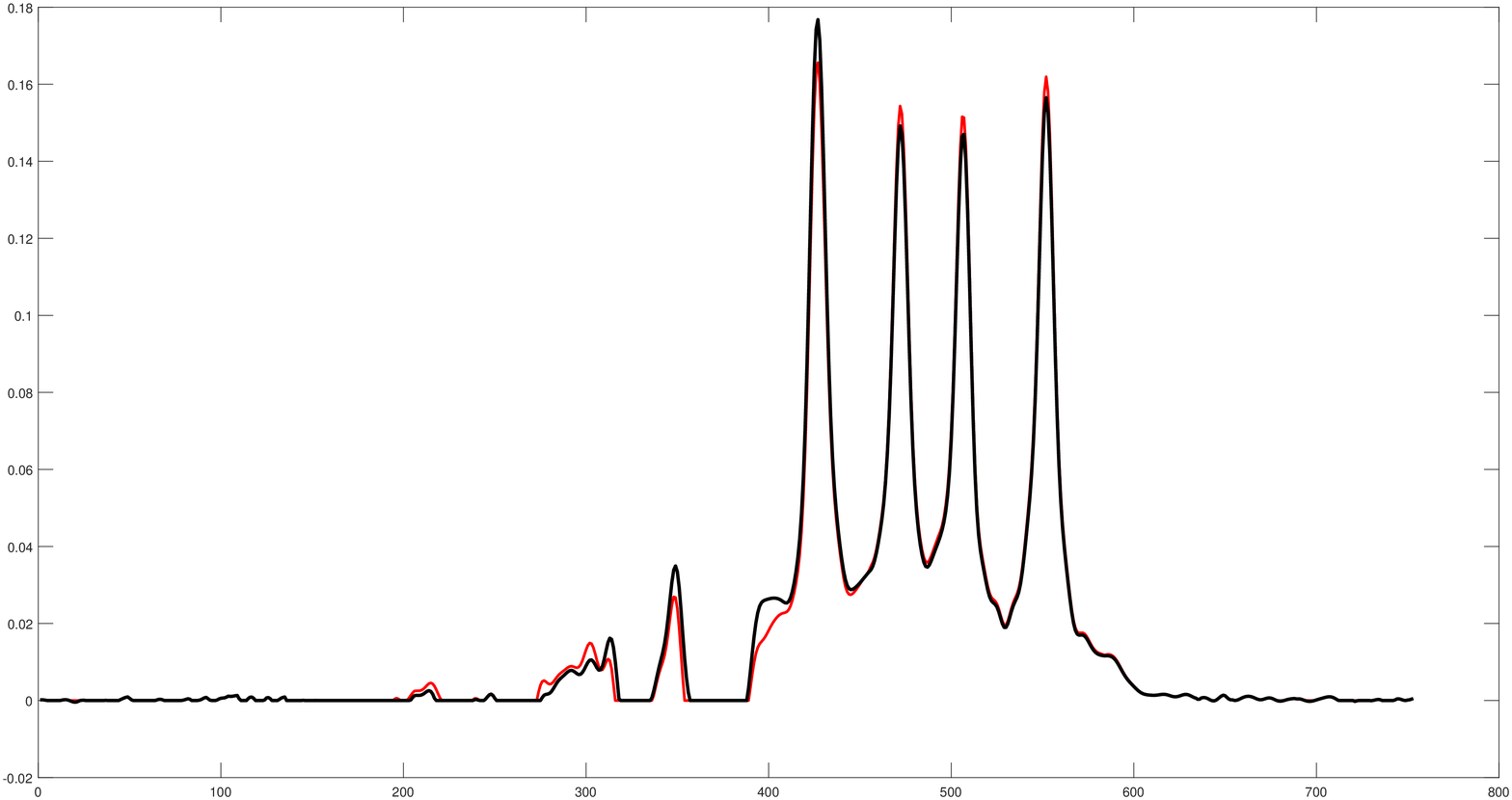}
\\
\includegraphics[height=4cm,width=14cm]{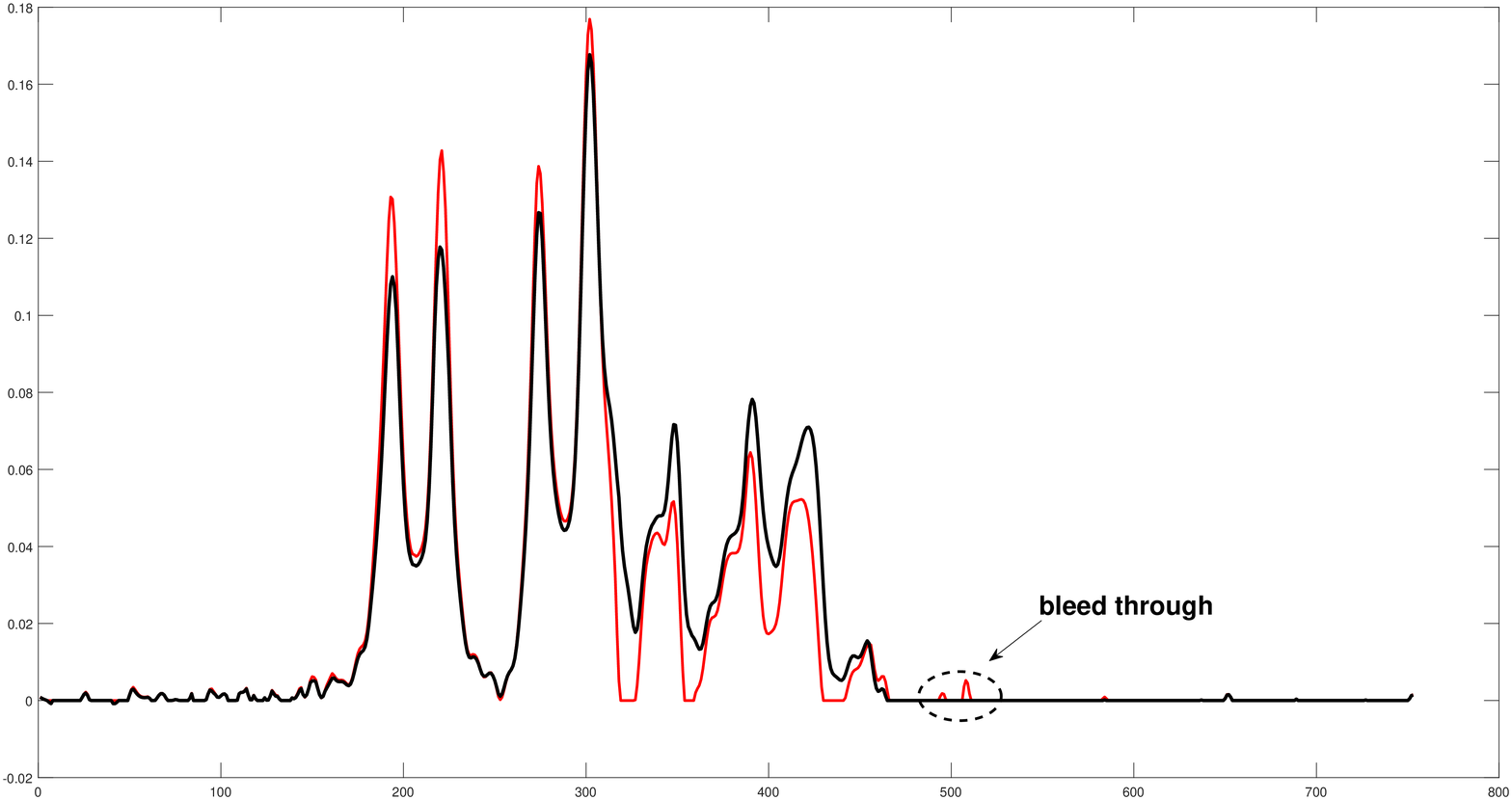}
\caption{Source signals recovered by NN method ({\color{red} red}) and NNP method ({\bf{black}}). Two bleed through peaks in the third signal by NN method can be seen while they are absent in NNP result.}
\label{RecNNvsNNP}
\end{figure}

\section{Conclusion}
This paper presented a preprocessing technique for sparse blind source separation of positive and overlapping data.  Arising in NMR spectroscopy,  the blind source separation problem attempts to unmix the spectral data into a set of basic components (source signals) under a local sparseness condition (the stand alone peaks, or SAP).  Based on the data geometry and SAP, vertex component analysis (such as NN's method) proves to be successful in identifying the mixing matrix whose columns are the edges of the convex cone enclosing the data points.  However, the results of VCA deviate from the truth due to the violation of the SAP in realistic data.  To overcome this problem and improve the separation results, we preprocess the mixture data by a weighted sharpening technique, which manages to enhance the peak resolution by subtracting a constant multiple of its second order derivative.  The fact that the sharpened peaks greatly reduce the violation of SAP source condition lead to an improvement on the identification of the convex cone.   Once an estimate of the mixing matrix is retrieved, the recovery of the source signals can be obtained by a nonnegative least squares (with sparsity constraint if needed).  Besides, we investigate how to tune in the weight parameter and provide an upper bound for this parameter to guide the implementation of the method.  Numerical results on NMR spectra data show satisfactory performance of the proposed method.  For a future line of inquiry, we plan to test and evaluate the method on realistic data from NMR and other spectroscopies, in collaboration with chemists and researchers and based on feedback, further improve the performance and robustness of the algorithms towards real-world applications.
\section*{Acknowledgements}  The authors wish to thank Professor A.J. Shaka and his group for their experimental NMR data.  YS was partially supported by Simons Foundation Grant 800006981.  JX was partially supported by NSF grant IIS-1632935.


\begin{thebibliography}{9999999}
\bibitem{Chang_07} C-I Chang, ed.,
{ ``Hyperspectral Data Exploitation: Theory and Applications"}, Wiley-Interscience, 2007.
\bibitem{choi} S. Choi, A. Cichocki, H. Park, and S. Lee, {\it Blind
source separation and independent component analysis: A review},
Neural Inform. Process. Lett. Rev., 6 (2005), pp. 1--57.
\bibitem{Cic} A. Cichocki and S. Amari, { ``Adaptive Blind Signal and
Image Processing: Learning Algorithms and Applications"}, John Wiley
and Sons, New York, 2005.
\bibitem{Comon} P. Comon,  {\em Independent component analysis--a new concept?}, Signal Processing,
36 (1994) pp. 287--314.
\bibitem{Comon1}P. Comon and C. Jutten, {\em Handbook of Blind
Source Separation: Independent Component
Analysis and Applications}, Academic Press, 2010.
\bibitem{MVT} M. Craig, {\it Minimum-volume transformation for remotely sensed data}, IEEE Transcations on Geoscience and Remote Sensing, 32 (1994), pp.  542--552.
\bibitem{Dul} J.H. Dul\`{a} and R.V. Helgason, {\it A new procedure
for identifying the frame of the convex hull of a finite collection
of points in multidimensional space}, European J. Oper. Res., 92
(1996), pp. 352--367.
\bibitem{nmr} R. Ernst, G. Bodenhausen, and A. Wokaun, {``Principles of Nuclear Magnetic Resonance in One and Two Dimensions"}, Oxford University Press, 1987.
\bibitem{G_O_09} Z. Guo and S. Osher, {\it Template matching via $\ell_1$ minimization and its application to hyperspectral target detection}, Inverse Problems and Imaging, 5 (2011), pp. 19--35.
\bibitem{Kov} L. Kovasznay and H. Joseph, {\it Image processing}, Proc. IRE, 43 (1955),
pp. 560--570.

\bibitem{Lee} D. D. Lee and H. S. Seung,
{\it Learning of the parts of objects by non-negative matrix
factorization}, Nature, 401 (1999), pp. 788--791.

\bibitem{NMF_OR} H. Li, T, Adali, and W. Wang, {\it Non-negative matrix factorization with orthogonality constraints and its
application to Raman Spectroscopy}, The Journal of VLSI Signal Processing Systems for Signal Image and Video Technology 48, (2007), pp. 83--97.

\bibitem{MVSA} J. Li, J.M. Bioucas-Dias, {\it Minimum volume simplex analysis: a fast algorithm to unmix hyperspectral data }, Geoscience and Remote Sensing Symposium, 3 (2008), pp. III-250--III-253.


\bibitem{Miao} H. Miao, {\it Endmember extraction from highly mixed data using minimum volume constrainsted nonnegative matrix factorization}, IEEE Trans. Geosci. Remote Sens., vol. 45(3), 2007, pp. 765--777.


\bibitem{NN05} W. Naanaa and J.--M. Nuzillard, {\it Blind source separation of positive and partially correlated data}, Signal
Processing, 85 (9) (2005), pp. 1711--1722.
\bibitem{NMF0} P. Paatero and U. Tapper, {\it Positive matrix factorization: A non-negative factor model with optimal utilization of error estimates of data
values}, Environmetr., vol. 5, no. 2, 1994, pp. 111--126.
\bibitem{Sch} R. Schachtner, G. P\"{o}pprl, and E. Lang,
{\it Towards unique solutions of non-negative matrix factorization problems by a determinant criterion},
Digit. Signal Process., vol, 21, 2011, pp. 528--534.

\bibitem{sun_xin_pNN} Y. Sun, C. Ridge, F. del Rio, A.J. Shaka, and J. Xin, {\it Postprocessing and sparse blind source separation of positive and partially overlapped data}, Signal Processing, 91(8)(2011), pp. 1838--1851.
\bibitem{YO} W. Yin, S. Osher, D. Goldfarb, J. Darbon,
{\it Bregman iterative algorithm for $\ell_1$-minimization with
applications to compressive sensing}, SIAM J. Image Sci., 1(143) (2008), pp.
143-168.
\end{thebibliography}
\end{document}